\newtheorem{definition}{Definition}
\newtheorem{theorem}{Theorem}
\newtheorem{lemma}{Lemma}
\renewcommand{\hat}{\widehat}
\newcommand{\E}{\mathbb{E}}
\newcommand{\D}{\mathcal{D}}
\title{Learning Ising Models with Independent Failures}
\author[1]{Surbhi Goel\footnote{surbhi@cs.utexas.edu}}
\author[2]{Daniel M. Kane\footnote{dakane@ucsd.edu}}
\author[1]{Adam R. Klivans\footnote{klivans@cs.utexas.edu}}
\affil[1]{Department of Computer Science, University of Texas at Austin}
\affil[2]{Department of Computer Science, UCSD}
\date{}
\begin{document}

\maketitle

\begin{abstract}%
  We give the first efficient algorithm for learning the structure of an Ising model that tolerates independent failures; that is, each entry of the observed sample is missing with some unknown probability $p$.  Our algorithm matches the essentially optimal runtime and sample complexity bounds of recent work for learning Ising models due to \cite{klivans2017learning}.

  We devise a novel unbiased estimator for the gradient of the \textit{Interaction Screening Objective} (ISO) due to \cite{vuffray2016interaction} and apply a stochastic multiplicative gradient descent algorithm to minimize this objective.  Solutions to this minimization recover the neighborhood information of the underlying Ising model on a node by node basis.

 \end{abstract}

% \begin{keywords}%
% graphical models, Ising models, structure learning, efficient algorithms, missing data, unbiased estimators
% \end{keywords}

\section{Introduction}
Ising models are fundamental undirected binary graphical models that capture pair-wise dependencies between the input variables. They are well-studied in the literature and have applications in a large number of areas such as physics, computer vision and statistics (\cite{jaimovich2006towards,koller2009probabilistic,choi2010exploiting,marbach2012wisdom}). One of the core problems in understanding graphical models is \textit{structure learning}, that is, recovering the structure of the underlying graph given access to random samples from the distribution.  Developing efficient algorithms for structure learning is a heavily-studied topic, especially for the case when the underlying graph is sparse or has bounded degree (\cite{dasgupta1999learning,lee2007efficient,bresler2008reconstruction,ravikumar2010high,bresler2014hardness,bresler2015efficiently,vuffray2016interaction,hamilton2017information,klivans2017learning,wu2018sparse}).  \cite{klivans2017learning} were the first to give an algorithm for learning the structure of Ising models (and more generally higher order MRFs) with essentially optimal runtime and sample complexity.

The focus of this paper is the setting where the samples drawn from the Ising model are corrupted by noise.  More specifically, we consider the \textit{independent failure model} where each entry of each sample is independently corrupted -- missing or flipped --- with some constant possibly unknown probability $p$.   Such a scenario can be expected, for example, in a sensor network, where sensors fail to report data occasionally due to internal failures.  In addition to being a natural question, structure learning in this model has been specifically posed as an open problem by \cite{chenlearning}. 

\subsection{Our Result}
The main contribution of our paper is a simple stochastic algorithm for learning the structure of an Ising model in the presence of corruptions with near optimal sample and time complexity:
%is the construction of an unbiased estimator that allows us to give 
\begin{theorem}[Informal version]
Consider an Ising model with dependency graph $G$ over $n$ vertices with minimum edge weight $\beta$. Further assume that the sum of absolute values of the weights of outgoing edges from each vertex is bounded by $\lambda$. Given corrupted draws from the Ising model such that each entry is missing or flipped with known probability $p \in (0,1)$, there exists an algorithm that recovers the underlying structure in time $\widetilde{O}(n^2)$ using at most $\lambda^4 \beta^{-4}\log (n) \exp(C \lambda)$ samples where $C$\footnote{In the missing-data setting, $C$ scales as $\frac{1}{1-p}$.  For $p<1/2$, $C$ is at most 10. We refer the reader to the proof of Theorem \ref{thm:main} for the exact dependence on $p$.} depends on $p$. 
\end{theorem}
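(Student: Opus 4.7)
The plan is to follow the node-by-node reduction of Vuffray \etal{}/Klivans--Meka: for each vertex $u$, we set up a convex surrogate (the Interaction Screening Objective)
\[
\mathcal{L}_u(w) \;=\; \E\!\left[\exp\!\Big(-x_u \sum_{v\neq u} w_v\, x_v\Big)\right],
\]
whose unique minimizer over a suitably large $\ell_1$-ball encodes the true incident weights $\{J_{uv}\}_v$ (up to scaling). Because $\mathcal{L}_u$ is strongly convex in the relevant region and the ground-truth weight vector has bounded $\ell_1$ norm $\lambda$, an approximate minimizer obtained via a \emph{multiplicative-weights / exponentiated-gradient} stochastic descent (run on the lifted non-negative simplex of radius $\lambda$, as in Klivans--Meka's Sparsitron-style analysis) will converge in $\tilde{O}(\lambda^2 \log n / \varepsilon^2)$ steps, after which thresholding the output at $\beta/2$ recovers the neighborhood of $u$. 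Repeating over $n$ vertices gives the overall $\tilde{O}(n^2)$ runtime. The only ingredient missing compared to the noiseless analysis is an \emph{unbiased, bounded-variance estimator for a single-sample stochastic gradient of $\mathcal{L}_u$} from corrupted data.

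Constructing that estimator is the heart of the argument and will be the main obstacle. The gradient component in direction $v$ is $\E[-x_u x_v \exp(-x_u\sum_{w} \theta_w x_w)]$. The crucial observation is that since each $x_i \in \{\pm 1\}$, we can factor $\exp(-\theta_w x_u x_w) = \cosh(\theta_w) - \sinh(\theta_w)\, x_u x_w$, which converts the exponential into a \emph{multilinear polynomial} in the $x_i$'s whose coefficients are products of $\cosh$ and $\sinh$ of the current parameter vector. For any monomial $\prod_{i \in S} x_i$, independence of the per-coordinate failures lets us build an unbiased replacement from the observed $\tilde{x}_i$: in the flipping model $\E[\tilde{x}_i]=(1-2p)x_i$, so $\prod_{i\in S}\tilde{x}_i/(1-2p)^{|S|}$ is unbiased for $\prod_{i\in S} x_i$; in the missing-data model ($\tilde{x}_i\in\{-1,0,1\}$) the same trick with denominator $(1-p)^{|S|}$ works. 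Summing these corrected monomials (weighted by the cosh/sinh coefficients) gives a single-draw unbiased estimator of the gradient. I would then re-collapse the sum using the identity $\cosh(\theta)-\sinh(\theta)y/(1-2p) = \frac{1}{2}\!\left[(1+y/(1-2p))e^{-\theta}+(1-y/(1-2p))e^{\theta}\right]$ (or an equivalent per-coordinate rewrite), so the estimator can be evaluated in $O(n)$ time rather than by expanding the product.

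The remaining work is variance control and plugging into the SGD convergence bound. Each per-coordinate factor is bounded in magnitude by roughly $(1+1/(1-2p))\cdot \max(e^{\theta_w},e^{-\theta_w})$, so the product over all coordinates is bounded by a term of the form $\exp(C\lambda)$ with $C$ depending on $p$ as claimed. Combining (i) this second-moment bound with (ii) the standard regret/convergence analysis of exponentiated-gradient SGD on the $\lambda$-simplex (which yields $\varepsilon$-accuracy in $O(\lambda^2 \log n \cdot B^2/\varepsilon^2)$ iterations, where $B^2$ is the gradient-estimator second moment) and (iii) the quadratic growth of $\mathcal{L}_u$ around its minimizer with modulus $\Omega(e^{-O(\lambda)})$, one obtains that $\varepsilon = \Theta(\beta^2)$ accuracy suffices for a correct $\beta/2$-threshold, yielding the advertised sample complexity $\lambda^4 \beta^{-4} \log(n)\exp(C\lambda)$. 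The principal difficulty, as flagged above, is ensuring the unbiased estimator's variance genuinely scales only as $\exp(O(\lambda))$ rather than picking up hidden dimension factors; this is where the factorized (per-coordinate) rewrite of $\exp(-x_u\langle\theta,x\rangle)$ is essential, since a naive monomial expansion would produce $2^n$ terms with exponentially large coefficients.
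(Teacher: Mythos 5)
Your overall architecture is the same as the paper's: minimize the Interaction Screening Objective node-by-node with a multiplicative-weights/exponentiated-gradient method on the lifted simplex, feed it a single-sample unbiased gradient estimator built from the corrupted data, and convert objective accuracy into $\ell_\infty$ parameter accuracy via restricted strong convexity before thresholding at $\beta/2$. Your sample-complexity accounting also matches. The gap is in the one step you correctly identify as the heart of the matter: the unbiased estimator. Your per-coordinate rewrite does not actually yield unbiasedness, because the hub variable $x_u$ appears in \emph{every} factor of the product. Concretely, if you replace each factor $\cosh(\theta_w)-\sinh(\theta_w)x_ux_w$ by $\cosh(\theta_w)-\sinh(\theta_w)\tilde{x}_u\tilde{x}_w/(1-2p)^2$ and multiply, the cross-term indexed by a set $S$ carries $\tilde{x}_u^{|S|}/(1-2p)^{2|S|}$; when $|S|$ is even, $\tilde{x}_u^{|S|}=1$ and there is no noise in $x_u$ to divide out, so that monomial is overcorrected by a factor $(1-2p)^{-|S|}$. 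The monomial-by-monomial debiasing (divide by $(1-2p)^{|T|}$ where $T$ is the set of variables of \emph{odd} degree) is correct but does not factor into a product of per-coordinate terms, precisely because the correct exponent depends on the parity of $|S|$ and on whether $v\in S$. So ``an equivalent per-coordinate rewrite'' is not available in the form you propose.

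The fix, which is what the paper does, is to treat the corruption of the hub coordinate separately from the others: condition on $Z$ and on the hub's corruption $C_n$, so that each remaining factor involves exactly one independently corrupted variable $X_j=C_jZ_j$ and can be debiased per coordinate (giving a product that is unbiased for $\prod_j\exp(-v_jC_nZ_nZ_j)$ conditionally on $C_n$), and then apply one \emph{outer} debiasing step for $C_n$ alone --- the factor $X_n/(1-p_n)$ in the missing-data model, and the combination $\bigl((1-p_n)h(w;X)+p_nh(-w;X)\bigr)/(1-2p_n)$ in the flipped model, exploiting that flipping $Z_n$ is equivalent to negating the parameter vector inside the exponential. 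With that two-level structure the estimator is exactly unbiased, evaluable in $O(n)$ time, and bounded by $(1-p_{\max})^{-2}\exp\bigl(\lambda/(1-p_{\max})\bigr)$ (the paper bounds the $\ell_\infty$ norm deterministically, which is what the SMG/Azuma analysis needs, rather than just the variance). Your variance heuristic ``product of per-coordinate bounds $\le \exp(C\lambda)$'' survives this correction, but as written your estimator is biased and the downstream martingale argument would not apply to it.
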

%The runtime and sample complexity of our algorithm matches the best known due to \cite{klivans2017learning} up to the constant $C$. 
For missing data, we extend our approach to the setting where $p$ is unknown. We show that using only $O(\log n)$ fresh samples to estimate these probabilities suffices for the guarantees to hold. Our results can also be easily extended to other similar noise models such as independent block failures where instead of each entry being flipped, fixed subsets of entries are simultaneously missing or flipped. 
% To the best of our knowledge\footnote{Delete this 'to the best of our knowledge' -- always avoid. Just start with 'Our work is the ... '} 
Our work is the first efficient algorithm for learning the structure of Ising models in the presence of independent failures; both our running time and sample complexity are essentially optimal (matching recent work due to \cite{klivans2017learning}). 

\subsection{Our Approach} 
% \footnote{'Our approach' should come after Related Work.  Also, this paragraph is too brief.  I would recommend an itemize or separate headers: one for 'Optimization' and the other for 'Reconstruction' and describe both procedures.  I would also mention moving to the simplex to help the reader.  In addition, I would mention WHY we are using ISO instead of other algorithms: basically because the objective function is an exponential, and so we can use the product structure to analyze an estimator that works coordinate-wise.}

Our approach has three main components that we briefly explain here.
\paragraph{Optimization Problem:} We follow the ``nodewise-regression'' approach of recovering the graph by solving an optimization problem to identify the neighborhood of each vertex. To do so, we minimize the Interaction Screening Objective (ISO) proposed by \cite{vuffray2016interaction} over the $l_1$-norm constrained ball.  The ISO satisfies a property known as \textit{restricted strong convexity} which allows us to recover the underlying edge weights from an approximate minimum solution.
\paragraph{Minimization Procedure:} Instead of using convex programming as in \cite{vuffray2016interaction}, we minimize ISO using \textit{stochastic multiplicative gradient descent} (SMG) due to \cite{kakade2008}.  SMG is a stochastic, multiplicative-weight update algorithm and therefore runs on the simplex (modeling a probability distribution).  As such, we need to transform our optimization problem to the simplex. The main motivation for using SMG is two-fold: 1) it can handle $l_1$-constraints, 2) it requires access to only an unbiased, bounded estimator of the gradient to give convergence guarantees.
\paragraph{Unbiased Gradient Construction:} Despite independent failures in our sample, we are able to construct an unbiased estimator of the gradient of the (modified) ISO on the simplex. Our constructor is able to exploit the \textit{decomposability} of the ISO, as it is an exponential of a linear function.  This allows us to use the independence property of the failures to come up with a simple unbiased estimator. The techniques used for this construction may be of independent interest.
% We follow the standard approach of recovering the graph by solving an optimization problem to identify the neighbourhood of each vertex.  Instead of using convex programming as in \cite{vuffray2016interaction}, we show that we can minimize ISO using a stochastic first-order method even with corrupted data. The main obstacle in our approach is the need of an unbiased estimator of the true gradient of ISO despite independent failure. We give a novel construction of the same which is simple to compute and is bounded in $l_\infty$ norm. 

\subsection{Related Work}
\paragraph{Ising Models.} Structure learning for Markov Random Fields has been studied since the 1960s.  For example, \cite{chow1968approximating} gave a greedy algorithm for undirected graphical models assuming the underlying graph is a tree.  There have been many works for learning Ising models under various assumptions on the structure of the underlying graph (e.g., \cite{lee2007efficient,yuan2007model,ravikumar2010high,yang2012graphical}). For Ising models specifically, the first assumption-free result (that is, no assumptions are made on the underlying graph other than sparsity) was given by \cite{bresler2015efficiently}.  One drawback of Bresler's work, however, is that the sample complexity has a doubly exponential dependence on the sparsity of the underlying graph. This was improved by \cite{vuffray2016interaction} where they used general-purpose tools for convex programming to minimize a certain objective function, but the running time of their approach was suboptimal. Subsequently, \cite{klivans2017learning} gave a multiplicative weight update algorithm called \textit{Sparsitron} that achieved near-optimal sample complexity (essentially matching known information-theoretic lower bounds) and near-optimal running time (under a computational hardness assumption for the {\it light-bulb problem} \cite{Val15}).  Recently, \cite{wu2018sparse} gave a different proof of \cite{klivans2017learning} using $l_1$-regularized logistic regression.

\paragraph{Learning Ising Models with Noise.} 
None of the works mentioned above handle the setting where there is noise in the observed data.  The problem of structure learning in the independent failure model (specifically missing data) was raised by \cite{chenlearning}.  The only work we are aware of that obtains positive results in this model is due to \cite{hamilton2017information}, who applied a broad generalization of the work of \cite{bresler2015efficiently}.  Their sample complexity, however, has a doubly exponential dependence on the sparsity of the underlying graph.  Our result gives a singly exponential dependence on the sparsity (known to be information-theoretically optimal) and can handle non-sparse graphs as long as they have small $l_1$-norm.

\paragraph{Robust Learning of Ising Models.} 
\cite{lindgren2018robust} studied structure learning in a different noise model motivated by recent results in robust learning.  In their work, an adversary is allowed to arbitrarily corrupt some $\eta$ fraction of a training set drawn from the underlying distribution.   They showed that if an adversary is allowed to corrupt even an exponentially small fraction of the samples, then no algorithm is robust. They further showed that their bounds are tight by proving robustness of \textit{Sparsitron} against exponentially small adversarial corruption.  The adversarial model is incomparable to the model studied here: in the independent failure model, {\em every} example will (on average) have a $p$ fraction of the entries missing (or flipped). 

%As for the independent failure model, specifically missing data, the problem was raised by \cite{chenlearning} however not much work has been done in this direction. To the best of %our knowledge, the only known work is by \cite{hamilton2017information} who gave a structural result that equipped them to extend \cite{bresler2015efficiently} to handle missing %%data however the dependence is doubly exponential in the exact sparsity. Our result gives a single exponential dependence on the $l_1$-sparsity and can handle non-sparse %structures as long as they have small $l_1$-norm.

\paragraph{Learning with Missing Data.} For general learning problems, various methods have been proposed to handle missing data including heuristics and maximum likelihood methods. In the context of high-dimensional sparse {\em linear} regression, \cite{loh2011high} proposed simple estimators to handle missing data based on solving optimization problems, and further showed that simple gradient descent recovers close to optimum solution despite non-convexity.  For distribution learning, \cite{shah2018learning} recently gave the first positive results for learning mixtures of gaussians with missing data with optimal sample/runtime complexity. It is not immediately clear how to use either of these techniques for our problem.

\subsection{Notation}
For vector $x \in \mathbb{R}^n$, $x_{-i}$ denotes $(x_j: j \ne i)$ and $x_S$ denotes $(x_i: i \in S)$. $|| \cdot ||_p$ denoted the $l_p$-norm. We denote the different distance/divergence metrics for distributions as follows: $\mathsf{KL}(P||Q)$ denotes the Kullback-Lieber divergence between probability distributions $P$ and $Q$, and $d_{\mathsf{TV}}(P,Q)$ denotes the total variation distance between $P$ and $Q$. We denote the $l_1$ ball of radius $R$ using $\mathsf{B}(R, n):= \{x \in \mathbb{R}^n: ||x||_1 \le R\}$ and the simplex with radius $R$ using $\Delta(R, n):= \{x \in \mathbb{R}^n: x \ge 0, \sum_{i=1}^n X_i = R\}$.

\section{Preliminaries}
\begin{definition}[Ising model]
Let $A\in \mathbb{R}^{n \times n}$ be a symmetric matrix with $A_{ii} = 0$ for all $i$ and $\theta \in \mathbb{R}^n$ be the mean-field vector. Let $G=(V,E)$ be an undirected graph with $V = [n]$ such that $(i,j) \in E$ if and only if $A_{ij} \ne 0$. The $n$-variable Ising model with underlying dependency graph $G$ is a distribution $\D(A, \theta)$ defined on $\{-1,1\}^n$ such that
\[
\Pr[Z = z] = \frac{1}{\mathcal{Z}} \exp\left(\sum_{(i,j) \in E}A_{ij} z_iz_j + \sum_{i \in V}\theta_i z_i\right)
\]
where $\mathcal{Z} = \sum_{z \in \{-1,1\}^n}\exp\left(\sum_{(i,j) \in E}A_{ij} z_iz_j + \sum_{i \in V}\theta_i z_i\right)$ is the normalizing factor. We denote the minimum edge weight by $\beta(A):= \min_{(i,j) \in E} |A_{ij}|$ and the width of the model by $\lambda(A, \theta):=$ $\max_{i \in V}\left(\sum_{j| (i,j) \in E}|A_{ij}| + |\theta_i|\right)$. 
\end{definition}
We will assume that the minimum edge weight is at least $\beta \le \beta(A, \theta)$ and the model is width bounded by $\lambda \ge \lambda(A, \theta)$. For ease of presentation, we will suppress notation and denote $\D(A, \theta)$ by $\D$.
% Note that $\lambda$ captures the sparsity of the neighborhood of each vertex. The probability distribution of a spin configuration $ z$ under this Ising model (zero mean-field) is as follows:
% \[
% \Pr[Z = z] = \frac{1}{\mathcal{Z}} \exp\left(\sum_{(i,j) \in E}A_{ij} z_iz_j\right) \text{where}\mathcal{Z} = \sum_{z \in \{-1,1\}^n}\exp\left(\sum_{(i,j) \in E}A_{ij} z_iz_j\right).
% \]
% We will denote $\sigma:= \min_{(i,j) \in E} |A_{ij}|$, $\psi:= \maX_{(i,j) \in E} |A_{ij}|$, $d:= \maX_{i \in V} |\{j| (i,j) \in E\}|$ and $\lambda:= \maX_{i \in V}\sum_{j| (i,j) \in E}|A_{ij}|$. By definition $\sigma d \le \lambda \le \psi d$.
A useful property of bounded width Ising models is that the conditional distributions are bounded away from 0 and 1. More formally,
\begin{lemma}[\cite{bresler2015efficiently}]\label{lem:bresler}
For any node $v \in V$, subset $S \subseteq V \backslash \{v\}$, and any fixed configuration $z_S$ of the subset $S$,
\[
\min\left\{\Pr[Z_{v} = 1|Z_S = z_S], \Pr[Z_{v} = -1|Z_S = z_S] \right\} \ge\frac{1}{2}\exp(-2 \lambda).
\]
\end{lemma}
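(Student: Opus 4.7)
The plan is to first handle the case where we condition on \emph{all} other variables $Z_{-v}$, for which the Ising model gives a closed-form sigmoid expression, and then bootstrap to the partial-conditioning case $Z_S = z_S$ via convex combinations.

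\textbf{Step 1: Full conditional.} Using the definition of $\D$, the factor $\mathcal{Z}$ and every term in the exponent not involving $z_v$ cancels between numerator and denominator, so
\[
\Pr[Z_v = 1 \mid Z_{-v} = z_{-v}] \;=\; \frac{\exp(h_v)}{\exp(h_v) + \exp(-h_v)},
\qquad h_v := \sum_{j:(v,j)\in E} A_{vj}z_j + \theta_v,
\]
and analogously for $Z_v = -1$ with $h_v$ replaced by $-h_v$. By the triangle inequality and the width bound, $|h_v| \le \sum_{j:(v,j)\in E} |A_{vj}| + |\theta_v| \le \lambda$. Hence, regardless of the sign of $z_v \in \{\pm 1\}$,
\[
\Pr[Z_v = z_v \mid Z_{-v} = z_{-v}] \;\ge\; \frac{1}{1+\exp(2\lambda)}.
\]

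\textbf{Step 2: Simplify the bound.} The elementary inequality $1 + e^{-2\lambda} \le 2$ (valid for $\lambda \ge 0$) rearranges to $\frac{1}{1+e^{2\lambda}} \ge \tfrac{1}{2}e^{-2\lambda}$, giving the desired quantitative form.

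\textbf{Step 3: Marginalize down to $S$.} Let $T = V \setminus (S \cup \{v\})$. By the law of total probability,
\[
\Pr[Z_v = z_v \mid Z_S = z_S]
\;=\; \sum_{z_T \in \{\pm 1\}^T}
\Pr[Z_v = z_v \mid Z_{-v} = (z_S,z_T)]\,\Pr[Z_T = z_T \mid Z_S = z_S].
\]
This expresses the quantity of interest as a convex combination of numbers each at least $\tfrac{1}{2}e^{-2\lambda}$ by Step 1 and Step 2, and therefore the combination itself is at least $\tfrac{1}{2}e^{-2\lambda}$.

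There is no real obstacle; the only mild care needed is in Step 1, where one must notice that every cross term $A_{ij}z_iz_j$ with $i,j\ne v$ and every mean-field contribution $\theta_i z_i$ for $i\ne v$ appears identically in numerator and denominator and thus cancels, leaving only the local field $h_v$. Once that cancellation is performed, the remainder is a one-line sigmoid calculation plus a convexity argument.
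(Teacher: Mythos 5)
Your proof is correct. The paper itself does not prove this lemma---it is imported verbatim from \cite{bresler2015efficiently}---and your argument (reduce to the fully-conditioned sigmoid $\exp(z_v h_v)/(\exp(h_v)+\exp(-h_v))$ with local field $|h_v|\le\lambda$, bound $\frac{1}{1+e^{2\lambda}}\ge\frac{1}{2}e^{-2\lambda}$, then marginalize over $Z_T$ as a convex combination) is exactly the standard derivation given in that reference, so nothing further is needed.
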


In this paper, we are interested in \textit{structure learning},
% \footnote{non-standard term. please change. Surbhi: Isn't structure recovery standard? It is in Bresler's paper.  
that is, recovering the edges of the dependency graph of an unknown Ising model given independent draws from it. We will focus on recovery in the presence of corruptions in the observed samples. The model of corruptions we study in our paper is known as the \textit{independent failures models}: corruptions of each variable are independent of all the other variables. More specifically, we will consider the following two special cases of independent failures:
% More formally,
% \begin{definition}[Structure learning]
% Let $\D(A, \theta)$ be an unknown $n$-variable Ising model with dependency graph $G$. An algorithm is said to recover the structure of the Ising model if given  access to samples drawn according to $\D(A, \theta)$, the algorithm outputs a matrix $\bar{A} \in \mathbb{R}^{n \time n}$ such that for all $i, j \in [n]$, $\bar{A}_{ij} \ne 0$ if and only if $(i,j) \in G$.
% \end{definition}
\begin{enumerate}
  \item \textit{Missing Data}: In this setting, for sample $z$ we will instead observe $x$ such that for all $i$, with probability $p_i$, $x_i = ?$ (missing) otherwise $x_i = z_i$.
  \item \textit{Flipped Data}: In this setting, for sample $ z$ we will instead observe $x$ such that for all $i$, with probability $p_i$, $x_i = -z_i$ (flipped) otherwise $x_i = z_i$.
%  \item \textit{Gaussian Noise}: In this setting, for sample $ z$ we will instead observe $x$ such that for all $i$, $X_i = z_i + \eta_i$ where $\eta_i \sim N(0,\sigma^2)$.
\end{enumerate}

\section{Main Approach}
To recover the underlying graph, we will show how to reconstruct the neighborhood of each vertex by solving a convex constrained minimization problem. We will first describe the optimization problem and then show how to minimize the same using a stochastic first-order method assuming access to an unbiased estimator of the gradient. Subsequently we will show how to recover the neighborhood from the optimization solution. Lastly, we will detail the construction of the unbiased estimators for the two given noise models.

\noindent\textbf{Note:} For ease of presentation, we will consider Ising models with zero mean-field ($\theta = 0$). For details on the non-zero mean-field case, we refer the reader to Appendix \ref{app:nonzero}

 \subsection{Optimization Problem}
 Consider the neighborhood of a fixed vertex (WLOG we choose $n$, we can do the same analysis for any vertex). \cite{vuffray2016interaction} proposed to minimize the Interaction Screening Objective (ISO) as follows:
\begin{align*}
\noalign{\textbf{Optimization Problem 1:}}
\min_{v \in \mathbb{R}^{n-1}} & \quad\quad S(v) := \mathbb{E}_{Z \sim \D} \left[\exp\left(-\sum_{j=1}^{n-1}v_{j} Z_n Z_j\right) \right]\\
\text{subject to} &\quad\quad || v||_1 \leq \lambda.
\end{align*}

\sloppy
\cite{vuffray2016interaction} studied the empirical version of Optimization Problem 1: the objective is computed using $m$ samples $\{z^1, \ldots, z^m\}$ drawn from the Ising model as $\hat{S}(v):= \frac{1}{m} \sum_{i=1}^m \exp\left(-\sum_{j=1}^{n-1}v_{j} z^i_n z^i_j\right)$. They proved various useful properties of $\hat{S}$ that directly extend to $S$. Firstly, they showed that the optimal solution to Optimization Problem 1 captures the edge weights of the neighborhood of $n$ in $G$. More formally,
% \footnote{What's
  % the empirical version and don't you need bounds on sample complexity
  % to say something about the empirical version of the problem?}

% \footnote{Lemma 4 Appears abruptly. You should explain why you are
% stating it 'the following lemma shows...' Also, the lemma should not
% have 'given optimization problem' Which one? Label the optimization
% % problem and then refer to it in the lemma. Try to avoid a corollary
% within a lemma 'thus, $v^*$ is an optimal solution.'  Do you mean global
% mnimum? Anyway, why does
% this imply it is an optimal solution? Surbhi: Fixed most of the comments. It is a convex problem, gradient 0 will imply optimal point as it belongs to the constraint set.}
\begin{lemma}[\cite{vuffray2016interaction}] \label{lem:grad}
Let $v^* \in \mathbb{R}^{n-1}$ be such that for $i \in [n-1]$, $v^*_i = A_{ni}$ if $(i,n)\in E$ else 0. Then we have, $\nabla S(v^*) = 0$ and $v^*$ is a global minimum Optimization Problem 1.
\end{lemma}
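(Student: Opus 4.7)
My plan is to show the gradient of $S$ vanishes at $v^*$ by conditioning on all variables except $Z_n$, and then invoke convexity of $S$ to conclude global optimality within the $\ell_1$-ball.

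First I would write the partial derivatives explicitly: differentiating under the expectation gives
\[
\frac{\partial S}{\partial v_i}(v) = -\,\mathbb{E}_{Z \sim \D}\!\left[Z_n Z_i \exp\!\left(-\sum_{j=1}^{n-1} v_j Z_n Z_j\right)\right].
\]
The key observation is that at $v = v^*$, the sum inside the exponential equals $Z_n \cdot h(Z_{-n})$, where $h(Z_{-n}) := \sum_{j:(j,n)\in E} A_{nj} Z_j$ is precisely the local field at node $n$. This matches the exponent that appears in the conditional distribution of $Z_n$ given the rest: by the definition of the Ising model,
\[
\Pr[Z_n = z \mid Z_{-n}] \;=\; \frac{\exp(z\,h(Z_{-n}))}{\exp(h(Z_{-n})) + \exp(-h(Z_{-n}))}.
\]

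The main step is the conditional computation. Conditioning on $Z_{-n}$, the inner expectation becomes
\[
\mathbb{E}\!\left[Z_n \exp(-Z_n h) \mid Z_{-n}\right] \;=\; (1)\,e^{-h}\cdot\frac{e^{h}}{e^{h}+e^{-h}} + (-1)\,e^{h}\cdot\frac{e^{-h}}{e^{h}+e^{-h}} \;=\; 0,
\]
where $h = h(Z_{-n})$. Since $Z_i$ for $i \in [n-1]$ is $Z_{-n}$-measurable, this yields
\[
\frac{\partial S}{\partial v_i}(v^*) = -\,\mathbb{E}_{Z_{-n}}\!\big[Z_i \cdot \mathbb{E}[Z_n \exp(-Z_n h) \mid Z_{-n}]\big] = 0
\]
for every $i \in [n-1]$, establishing $\nabla S(v^*) = 0$.

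For global minimality, I would note that $S$ is convex: it is an expectation of exponentials of affine functions of $v$, each of which is convex, so $S$ inherits convexity. A critical point of a convex function is a global minimum over $\mathbb{R}^{n-1}$, hence certainly over the $\ell_1$-ball. It remains to check feasibility, i.e.\ $\|v^*\|_1 \le \lambda$; this is immediate from the width bound, since $\|v^*\|_1 = \sum_{j:(j,n)\in E} |A_{nj}| \le \lambda(A,0) \le \lambda$. I do not anticipate any real obstacle here --- the only subtlety is correctly invoking the conditional distribution of $Z_n$ given $Z_{-n}$, which cancels the $\exp(-Z_n h)$ factor by construction; this cancellation is precisely the ``interaction screening'' property that motivated the ISO.
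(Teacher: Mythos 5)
Your proposal is correct and follows essentially the same route as the paper: the paper expands the expectation as a sum over all configurations, observes that the ISO exponent cancels the node-$n$ terms of the Hamiltonian so the summand no longer depends on $z_n$, and factors out $\sum_{z_n \in \{-1,1\}} z_n = 0$, which is precisely your conditional computation $\mathbb{E}[Z_n \exp(-Z_n h)\mid Z_{-n}] = 0$ written without normalizing by the conditional distribution. Your convexity and feasibility arguments likewise match the paper's one-line conclusion.
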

% \begin{proof}[Proof of Lemma \ref{lem:grad}]
% Computing the gradient, we have
% \begin{align*}
% \nabla S(v^*)_i &= - \mathbb{E}_{Z \sim \D} \left[\exp\left(-\sum_{j| (n,j) \in E}A_{nj} Z_n Z_j\right) Z_n Z_i\right]\\
% & = -\frac{1}{\mathcal{Z \sim \D}} \sum_{z \in \{-1, 1\}^n} \exp\left(-\sum_{j| (n,j) \in E}A_{nj} z_n z_j \right) \exp\left(\sum_{(i,j) \in E}A_{ij} z_iz_j\right)z_n z_i\\
% & = -\frac{1}{\mathcal{Z \sim \D}} \left(\sum_{z_{-n} \in \{-1, 1\}^{n-1}} \exp\left(\sum_{i, j \neq n, (i,j) \in E}A_{ij} z_iz_j\right) z_i\right) \left(\sum_{z_n \in \{-1, 1\}}z_n\right) = 0
% \end{align*}
% Since $S$ is convex, gradient is 0 at $v^*$ and $v^*$ lies on the $l_1$-ball of radius $\lambda$ by assumption, it is the global minimum.
% \end{proof}

Further they proved that $S$ satisfies a
property known as \textit{restricted strong convexity} (RSC) which
enables us to recover a vector close to $v^*$ by approximately minimizing the objective. Here we give a
stronger version of their result which holds more generally. 
\begin{lemma}\label{lem:rsc}[RSC for $S$]
For all $ v \in \mathsf{B}(\lambda, n-1)$,
\[
S(v) - S(v^*) \ge \nabla S(v^*)^T(v - v^*) + \frac{\exp(-3\lambda)}{1 + \lambda} || v - v^*||_\infty^2
\]
\end{lemma}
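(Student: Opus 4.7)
The plan is to combine a pointwise Taylor expansion of the exponential appearing in $S$ with the anti-concentration guarantee of Lemma \ref{lem:bresler}, applied coordinate by coordinate. Throughout, write $a := \sum_{j=1}^{n-1} v_j z_n z_j$ and $a^\ast := \sum_{j=1}^{n-1} v^\ast_j z_n z_j$; since $\|v\|_1, \|v^\ast\|_1 \le \lambda$ and $|z_n z_j| \le 1$, both $a$ and $a^\ast$ lie in $[-\lambda, \lambda]$.

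First I would apply Taylor's theorem with Lagrange remainder to $f(t) = e^{-t}$ about $a^\ast$, obtaining $e^{-a} - e^{-a^\ast} + e^{-a^\ast}(a - a^\ast) = \tfrac{1}{2} e^{-\xi}(a - a^\ast)^2$ for some $\xi$ between $a$ and $a^\ast$, so $e^{-\xi} \ge e^{-\lambda}$. Taking expectations over $Z \sim \mathcal{D}$ and invoking the identity $\nabla S(v^\ast)^T(v - v^\ast) = -\mathbb{E}[e^{-a^\ast}(a - a^\ast)]$ (a direct differentiation of the ISO) converts the pointwise bound into $S(v) - S(v^\ast) - \nabla S(v^\ast)^T(v - v^\ast) \ge \tfrac{1}{2} e^{-\lambda}\,\mathbb{E}[(a - a^\ast)^2]$.

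The next step is to lower-bound $\mathbb{E}[(a - a^\ast)^2]$ by a constant times $\|v - v^\ast\|_\infty^2$, and this is where the Ising structure enters. Using $Z_n^2 = 1$, rewrite $a - a^\ast = \sum_j u_j Z_j$ with $u := v - v^\ast$, and let $i^\ast \in \arg\max_j |u_j|$, so $u_{i^\ast}^2 = \|u\|_\infty^2$. Conditioning on $Z_{-i^\ast}$, the sum takes the form $c + u_{i^\ast} Z_{i^\ast}$ for a random constant $c = c(Z_{-i^\ast})$, whose conditional second moment equals $p(c + u_{i^\ast})^2 + (1-p)(c - u_{i^\ast})^2$ with $p := \Pr[Z_{i^\ast} = 1 \mid Z_{-i^\ast}]$. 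Lemma \ref{lem:bresler}, applied at vertex $i^\ast$ with $S = [n] \setminus \{i^\ast\}$, gives $\min(p, 1-p) \ge \tfrac{1}{2} e^{-2\lambda}$, whence the conditional second moment is at least $\tfrac{e^{-2\lambda}}{2}\bigl[(c + u_{i^\ast})^2 + (c - u_{i^\ast})^2\bigr] = e^{-2\lambda}(c^2 + u_{i^\ast}^2) \ge e^{-2\lambda}\|u\|_\infty^2$. Taking the outer expectation and chaining with the Taylor step produces the desired RSC bound with an $e^{-3\lambda}$ prefactor.

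I expect the main technical wrinkle to lie in pinning down the exact constant $1/(1+\lambda)$ rather than the cleaner $1/2$ that drops out of the Lagrange remainder: when $\lambda < 1$ the stated constant is actually the sharper one, so the naive Taylor bound is not quite sufficient. To reach the stated form, I would trade the Lagrange remainder for the exact identity $e^{-a} - e^{-a^\ast} + e^{-a^\ast}(a - a^\ast) = e^{-a^\ast}(e^{-b} + b - 1)$ with $b = a - a^\ast \in [-2\lambda, 2\lambda]$ and split on the sign of $b$. For $b \le 0$ the elementary estimate $e^{-b} + b - 1 \ge b^2/2$ applies directly. For $b \ge 0$, using $e^{-a^\ast} e^{-b} = e^{-a} \ge e^{-\lambda}$ swaps the exponential prefactor, and combining with an inequality of the form $e^{-b} + b - 1 \ge b^2/(2(1 + b))$ (verifiable by a short calculus check matching the first two derivatives at $b = 0$) extracts the $1/(1+\lambda)$ factor. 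Aside from this elementary calibration, the Taylor-remainder plus Bresler-conditioning combination is the entire proof.
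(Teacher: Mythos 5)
Your proposal is essentially the paper's own proof. The heart of the lemma --- lower-bounding the quadratic form $\E[(\sum_j u_j Z_j)^2]$ by $e^{-2\lambda}\|u\|_\infty^2$ via conditioning on $Z_{-i^*}$ and invoking Lemma \ref{lem:bresler} at the maximizing coordinate --- is exactly the argument in the appendix; the only difference is that you re-derive the second-order (self-concordance) step from a Taylor/exact-remainder identity, whereas the paper imports it as Lemma 5 of \cite{vuffray2016interaction}.

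One small caveat: your calibration of the constant does not actually reach $1/(1+\lambda)$ when $\lambda<1$. The $b\le 0$ branch yields only the factor $1/2$, and on the $b\ge 0$ branch the bound $e^{-b}+b-1\ge b^2/(2(1+b))$ with $|b|\le\|v-v^*\|_1\le 2\lambda$ gives $1/(2(1+2\lambda))$, so pointwise you obtain $\tfrac{1}{2}e^{-3\lambda}\|v-v^*\|_\infty^2$ rather than $\tfrac{1}{1+\lambda}e^{-3\lambda}\|v-v^*\|_\infty^2$. This is not a substantive gap: the paper's own derivation has the identical looseness (it cites a bound with denominator $1+\|v-v^*\|_1\le 1+2\lambda$ and then writes $1+\lambda$), and the constant is irrelevant to every downstream use of the lemma.
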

\noindent \textbf{Note:} Our proof technique improves on their result,
as we work directly with the $l_\infty$ norm, avoiding the use of the $l_2$ norm in their proof (c.f. Lemma 7 and 8 \cite{vuffray2016interaction}). Using our analysis in their proof improves their sample complexity bounds for structure learning from $\max(d, \beta^{-2})d^3 e^{O(\beta d)}\log n$ to $\max(d^2, \beta^{-2})d^2 e^{O(\beta d)}\log n$ where $d$ is the maximum degree of each node in $G$. This improvement is highlighted when $\beta d < 1$ (where the improvement is by a factor of $d$).

With the above lemma in hand, we can show that small loss implies closeness to $v^*$ and hence recovery of the edge weights of the neighborhood of $n$. More formally, if $S(v) - S(v^*) \le \epsilon$ then $||v - v^*||_\infty^2 = \max_{i \in [n-1]} |v_i - A_{ni}|^2 \le (1 + \lambda) \exp(3 \lambda) \epsilon$.

\subsection{Minimizing ISO with \boldmath$l_1$ Constraint}
%In this section, we will show how to minimize the convex loss $S$ on the $l_1$-ball. The technical content in this section is based on the notes \cite{kakade2008}. 
% More formally we will focus on the following constrained optimization problem with convex $c$ on the ball,
% \begin{align*}
% \min_{w\in \mathbb{R}^k} &\quad\quad c(w) \\
% \text{subject to}&\quad\quad||w||_1 \le W
% \end{align*}
 To solve Optimization Problem 1,
 % \footnote{just label the{}
   % problem and refer back to it, try not to use 'at hand'},
    we will use an algorithm due to \cite{kakade2008} called Stochastic
 Multiplicative Gradient Descent (SMG) (see Algorithm
 \ref{alg:smg}). SMG is a multiplicative-weight update algorithm that
 optimizes over the simplex instead of the $l_1$-constraint ball.  As
 such, we will need to reduce our problem from the $l_1$ ball to the
 simplex to obtain the appropriate guarantees.   Since the SMG analysis only
 appears in a set of lecture notes (and there are some minor errors in
 the writeup), for completeness we include the proof of the following theorem in the appendix:
  \begin{algorithm}\label{alg:smg}
  \caption{Stochastic Multiplicative Gradient Descent}
\begin{algorithmic}
\STATE \textbf{Input:} $l_1$-bound $W$, dimension $k$ and convex function $c$ over $\Delta(W,k)$.
  \STATE Set $u^1_i := W/k$ for all $i$.
  \FOR{$t=1$ {\bfseries to} $T$}
  \STATE For unbiased estimate of gradient $ g^t$ of $c(u^t)$ and $C^t := \frac{(u^t)^T g^t}{W}$
  \STATE For all $i$, update $u_i^{t+1} := u_i^t\left(1 - \eta g^t_i + \eta C^t \right)$
  \ENDFOR
  \STATE \textbf{Output:} $\bar{u} = \frac{1}{T} \sum_{i = 1}^T u^t$
\end{algorithmic}
\end{algorithm}
% SMG run on the optimization problem recovers a solution with loss close to the optimal solution. We present the theorem below along with the proof:
\begin{theorem}[\cite{kakade2008}] \label{thm:smg}
Let $u^*$ be an optimal solution of $\min_{u \in \Delta(W,k)} c(u)$ for convex function $c$ on $\Delta(W,k)$. For SMG run on $c$, as long as $\E[ g^t|\mathcal{H}^{t-1}] = \nabla c(u^t)$ where $\mathcal{H}^{t-1}$ denotes history of previous iterations and $||\nabla c(u^t)||, || g^t||_\infty \leq B$ for all $ u$ on the simplex and for all iterations $t \le T$, then with probability $1- \delta$ for suitably chosen $\eta$
\[
c(\bar{u}) \leq c(u^*) + 4 BW \left(\sqrt{\frac{\log k}{T}} + \sqrt{\frac{2}{T}\log\frac{1}{\delta}}\right).
\]
\end{theorem}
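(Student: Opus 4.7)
The plan is to analyze SMG as a mirror-descent / exponentiated-gradient style scheme on the scaled simplex, using a KL-divergence potential, and then pass from a regret bound on the linearized objective to a bound on $c(\bar u)$ via convexity plus a martingale concentration inequality.

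First I would rescale to probabilities. Let $p^t := u^t/W$ and $p^* := u^*/W$, so both lie in the ordinary probability simplex over $[k]$. Writing $a^t_i := g^t_i - C^t$, the update becomes $p^{t+1}_i = p^t_i(1-\eta a^t_i)$, and the choice of $C^t = (u^t)^T g^t / W$ is exactly what forces $\sum_i p^t_i a^t_i = 0$, which is what preserves the simplex. Take as potential $\Phi^t := \mathsf{KL}(p^* \,\|\, p^t)$. Then
\[
\Phi^{t+1} - \Phi^t = -\sum_i p^*_i \log(1 - \eta a^t_i).
\]
Since $|g^t_i|\le B$ and $|C^t| \le \|p^t\|_1 \|g^t\|_\infty \le B$, we have $|a^t_i|\le 2B$, so for $\eta \le 1/(4B)$ the inequality $-\log(1-x) \le x + x^2$ (valid for $|x|\le 1/2$) applies termwise, giving
\[
\Phi^{t+1} - \Phi^t \le \eta \sum_i p^*_i a^t_i + 4\eta^2 B^2 = \tfrac{\eta}{W}(u^* - u^t)^T g^t + 4\eta^2 B^2,
\]
where I used $\sum_i p^*_i a^t_i = (p^*)^T g^t - C^t = \frac{1}{W}(u^*-u^t)^T g^t$ since $C^t = (p^t)^T g^t$.

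Next I would telescope. Summing from $t=1$ to $T$ and using $\Phi^{T+1}\ge 0$ together with $\Phi^1 = \mathsf{KL}(p^*\|\,\mathbf{1}/k) \le \log k$ yields
\[
\sum_{t=1}^T (u^t - u^*)^T g^t \;\le\; \frac{W\log k}{\eta} + 4TW\eta B^2,
\]
and balancing the two terms by $\eta = \sqrt{\log k/(4TB^2)}$ produces the bound $\sum_t (u^t-u^*)^T g^t \le 4WB\sqrt{T\log k}$. (One has to check that this $\eta$ satisfies $\eta \le 1/(4B)$ for $T\ge \log k$; this is the minor sanity-check step.)

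Finally, I would transfer from stochastic linearized regret to true function regret. By convexity of $c$, $c(\bar u) - c(u^*) \le \frac{1}{T}\sum_t \nabla c(u^t)^T(u^t-u^*)$. The difference $X^t := \nabla c(u^t)^T(u^t-u^*) - (u^t-u^*)^T g^t$ is a martingale difference sequence with respect to $\mathcal{H}^{t-1}$ (by $\mathbb{E}[g^t\mid\mathcal{H}^{t-1}] = \nabla c(u^t)$), and is bounded in absolute value by $\|g^t - \nabla c(u^t)\|_\infty \|u^t-u^*\|_1 \le 2B\cdot 2W = 4BW$. Azuma--Hoeffding then gives $\sum_t X^t \le 4BW\sqrt{2T\log(1/\delta)}$ with probability at least $1-\delta$. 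Combining with the deterministic regret bound and dividing by $T$ yields the stated inequality. The main subtlety, and the only place real care is needed, is the martingale step: verifying that $X^t$ has conditional mean zero requires that the $u^t$-dependent gradient query be evaluated before $g^t$ is drawn, which is consistent with the algorithmic description.
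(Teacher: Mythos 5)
Your proposal is correct and follows essentially the same route as the paper's proof: a KL-divergence potential argument on the (rescaled) simplex using $\log(1+x)\ge x-x^2$ to bound the linearized regret, combined with Azuma--Hoeffding on the martingale $\sum_t(\nabla c(u^t)-g^t)^T(u^t-u^*)$ with increments bounded by $4BW$. The only cosmetic difference is that you rescale to the probability simplex up front whereas the paper assumes $W=1$ and rescales at the end.
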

To apply SMG, we convert our optimization problem's constraint from the $l_1$ ball to the simplex. To do so, we define the following mappings:
\[
\Pi_{B\rightarrow \Delta}^k: \mathsf{B}(W, k) \rightarrow \Delta(W, 2k+1), \quad \Pi_{B\rightarrow \Delta}^k(w)_{i} =
\begin{cases}
 W - ||w||_1 & \text{ if } i = 2k+1\\
w_i &\text{ if } i \in \{1, \ldots, k\}, w_i \ge 0\\
-w_{i- k} &\text{ if } i \in \{k+1, \ldots, 2k\}, w_{i - k} < 0\\
0 &\text{ otherwise.}
\end{cases}
\]
\[
\Pi_{\Delta \rightarrow B}^k: \Delta(W, 2k + 1) \rightarrow \mathsf{B}(W, k), \quad \Pi_{\Delta \rightarrow B}^k(u)_i = u_i - u_{i + k} \text{ for }i \in \{1, \ldots, k\}
\]
It is not hard to verify that these are valid mappings from ball to simplex and vice-versa. Using the given transformation, the optimization problem on the simplex is as follows:
\begin{align*}
\noalign{\textbf{Optimization Problem 2:}}
\min_{w \in \mathbb{R}^{2n-1}} &\quad\quad\widetilde{S}(w):= S(\Pi_{\Delta \rightarrow B}^{n-1}(w)) = \mathbb{E}_{Z \sim \D} \left[\exp\left(-\sum_{j=1}^{n-1}(w_{j} - w_{n- 1 + j}) Z_n Z_j\right) \right]\\
\text{subject to} &\quad\quad|| w||_1 = \lambda, w \ge 0.
\end{align*}

Note that the above loss is also convex and similar to Lemma \ref{lem:grad}, we can show the following, 
\begin{lemma} \label{lem:gradw}
Let $w^* = \Pi_{B\rightarrow \Delta}^{n-1}(v^*)$, then $\nabla \widetilde{S}(w^*) = 0$ and $w^*$ is a global minimum of Optimization Problem 2.
\end{lemma}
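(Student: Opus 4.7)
The plan is to treat $\widetilde{S}$ as a pure change-of-variables of $S$, so both claims reduce to (i) a surjectivity check on the linear map $\Pi_{\Delta \to B}^{n-1}$ and (ii) a chain-rule computation combined with Lemma \ref{lem:grad}.

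First I would verify that $\Pi_{\Delta \to B}^{n-1}$ maps $\Delta(\lambda, 2n-1)$ onto $\mathsf{B}(\lambda, n-1)$. Surjectivity is witnessed by $\Pi_{B \to \Delta}^{n-1}$: a direct inspection of the two piecewise definitions shows $\Pi_{\Delta \to B}^{n-1} \circ \Pi_{B \to \Delta}^{n-1} = \mathrm{id}$ on $\mathsf{B}(\lambda, n-1)$, and $\Pi_{B \to \Delta}^{n-1}(v) \in \Delta(\lambda, 2n-1)$ by construction (the slack variable $W - \|v\|_1$ absorbs the remaining mass). In particular, $w^* := \Pi_{B \to \Delta}^{n-1}(v^*)$ is feasible and satisfies $\Pi_{\Delta \to B}^{n-1}(w^*) = v^*$.

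Global optimality of $w^*$ is then immediate: for any $w \in \Delta(\lambda, 2n-1)$, the vector $v := \Pi_{\Delta \to B}^{n-1}(w)$ lies in $\mathsf{B}(\lambda, n-1)$, so by Lemma \ref{lem:grad}
\[
\widetilde{S}(w) \;=\; S(v) \;\ge\; S(v^*) \;=\; \widetilde{S}(w^*).
\]
Convexity of $\widetilde{S}$ on the simplex also follows, since $\widetilde{S}$ is the composition of the convex function $S$ with a linear map.

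For the gradient, I would apply the chain rule coordinate-wise. Writing out the definition of $\widetilde{S}$, the coordinate $w_{2n-1}$ does not appear, so $\partial_{w_{2n-1}}\widetilde{S} \equiv 0$. For $j \in \{1, \ldots, n-1\}$, differentiating under the expectation gives
\[
\partial_{w_j}\widetilde{S}(w) \;=\; -\,\mathbb{E}_{Z \sim \D}\!\left[ Z_n Z_j \exp\!\Big(-\sum_{k=1}^{n-1}(w_k - w_{n-1+k}) Z_n Z_k\Big)\right] \;=\; \partial_{v_j} S(v)\big|_{v = \Pi_{\Delta \to B}^{n-1}(w)},
\]
while for $j \in \{n, \ldots, 2n-2\}$ the sign flips, giving $\partial_{w_j}\widetilde{S}(w) = -\partial_{v_{j-(n-1)}} S(v)\big|_{v = \Pi_{\Delta \to B}^{n-1}(w)}$. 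Evaluating at $w = w^*$, the inner vector is exactly $v^*$, and Lemma \ref{lem:grad} gives $\nabla S(v^*) = 0$; hence every coordinate of $\nabla \widetilde{S}(w^*)$ vanishes.

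Neither step is really hard; the only bookkeeping obstacle is keeping the two sign conventions and the padded coordinate $w_{2n-1}$ straight in the chain-rule expression, which is why I would make the ``$\Pi_{\Delta \to B}^{n-1}$ is linear and the image is the whole $l_1$ ball'' reduction explicit before writing any partial derivatives.
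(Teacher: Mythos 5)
Your proof is correct and follows essentially the same route the paper intends: the paper simply asserts the lemma as ``similar to Lemma~\ref{lem:grad}'' and computes $\nabla \widetilde{S}(w)_i = -\nabla\widetilde{S}(w)_{i+n-1}$ coordinate-wise (as spelled out in the proof of Lemma~\ref{lem:miss}), which is exactly your chain-rule computation through the linear map $\Pi_{\Delta\to B}^{n-1}$. Your explicit verification that $\Pi_{\Delta\to B}^{n-1}\circ\Pi_{B\to\Delta}^{n-1}=\mathrm{id}$ and that the image of the simplex lies in the $l_1$ ball is a welcome bit of bookkeeping the paper leaves implicit.
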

Using SMG we can thus solve Optimization Problem 2 as long as we have access to an unbiased estimator of the gradient.

\subsection{Unbiased Estimator of Gradient}
Let $v$ be a candidate solution of Optimization Problem 2.  Since we have missing/flipped data, it is not clear how to compute the loss or the gradient at the point $v$, and the gradient is required to execute a step of the SMG algorithm.  We will show, however, that it is possible to construct an unbiased estimator of the gradient with bounded $l_{\infty}$ norm for known $p_i$s. We then show that our estimates are sufficiently strong to plug-in to the SMG analysis. 

\subsubsection{Missing Data}
In the missing-data model, instead of getting samples from $\D$, we instead get samples from a corrupted distribution $\D_{\textsf{miss}}$ as follows:
\begin{itemize}
	\item Let $Z \sim \D$. Let $C_1 \sim \mathsf{Ber}(1 - p_1), \ldots, C_{n} \sim \mathsf{Ber}(1 - p_n)$.
	\item Output $X$ such that for all $i$, $X_i = C_iZ_i$ (replacing ? by 0).
\end{itemize} 
Here $\mathsf{Ber}(p)$ denotes the distribution of a Bernoulli random variable with probability $p$.
% A sample from $$Let $X_i$ be random variable corresponding to $Z_i$ such that $X_i = 0$ when we observe $?$ and $X_i = Z_i$ when otherwise. We can alternatively express $X_i = C_i Z_i$ where $C_i$ is a random variable that is 0 w.p $p_i$ and 1 otherwise. By our model, all $C_i$'s are independent of each other and of all $z_i$'s. 

The main intuition of our estimator can be made clear with the following example. Suppose we wish to construct an unbiased estimator for function $f(Z_i)$ then consider $g(X_i) = \frac{f(X_i) - p f(0)}{1-p}$. Then it is not hard to see that,
\begin{align*}
&\E_{X_i}\left[\frac{f(X_i) - p f(0)}{1-p}\middle| Z_i\right]  = \underset{C_i \sim \mathsf{Ber}(1 - p_i)}{\E}\left[\frac{f(C_iZ_i) - p f(0)}{1-p}\middle| Z_i\right] \\
&= (1 - p) \times \frac{f(Z_i) - p f(0)}{1-p} + p \times f(0) = f(Z_i).
\end{align*}
% \[
% \E_{X_i}\left[\frac{f(X_i) - p f(0)}{1-p}\middle| Z_i\right] = (1 - p) \times \frac{f(Z_i) - p f(0)}{1-p} + p \times f(0) = f(Z_i).
% \]
Since ISO is a product function in $Z_1, \ldots, Z_{n-1}$, we can apply the above estimator to each term of the product. For $Z_n$ we use the above idea on the so formed product of estimators. This allows us to construct the following estimators, $\forall i \in [n-1]$,
 \[
g_{\textsf{miss}}^i(w;X) = -\frac{X_n}{1 - p_n} \times\frac{\exp\left(-(w_{i} - w_{n- 1 + i})X_nX_i\right)X_i}{1 - p_i} \times\prod_{j \neq i,n}\frac{\exp\left(-(w_{j} - w_{n- 1 + j})X_nX_j\right) - p_j}{1- p_j}.
\]

The following lemma shows how to use $g_{\textsf{miss}}$ to construct an unbiased estimator of the gradient of $\widetilde{S}$.
\begin{lemma} \label{lem:miss}
Consider estimator $G_{\textsf{miss}}(w;X) = \sum_{i=1}^{n-1} g_{\textsf{miss}}^i(w;X) (e^i - e^{n - 1 + i})$ where $e^i \in \mathbb{R}^{2n-1}$ is the indicator vector for coordinate $i$. Then for all fixed $w$,
\[
\underset{X\sim \D_{\textsf{miss}}}{\E}[G_{\textsf{miss}}(w;X)] = \nabla \widetilde{S}(w).
\]
Also, for all $X \in \{0,1\}^n$ and $w \in \Delta{B}(\lambda, 2n-1)$,
\[
||G_{\textsf{miss}}(w;X)||_{\infty} \le \frac{1}{(1-p_{\max})^{2}} \exp\left(\frac{\lambda}{1-p_{\max}}\right)
\]
where $p_{\max} = \max_i p_i$.
\end{lemma}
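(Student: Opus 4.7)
The plan is to prove both claims by direct computation, exploiting the mutual independence of the corruption bits $C_1,\ldots,C_n$ and the product form of the ISO. Writing $a_j = w_j - w_{n-1+j}$, one checks immediately that
\[
\frac{\partial \widetilde S}{\partial w_i}(w) = -\mathbb{E}_Z\!\left[Z_n Z_i \exp\!\Big(-\sum_{j=1}^{n-1} a_j Z_n Z_j\Big)\right], \quad \frac{\partial \widetilde S}{\partial w_{n-1+i}}(w) = -\frac{\partial \widetilde S}{\partial w_i}(w),
\]
and $\widetilde S$ does not depend on $w_{2n-1}$. Given the structure $G_{\textsf{miss}} = \sum_i g_{\textsf{miss}}^i(e^i - e^{n-1+i})$, it therefore suffices to show $\mathbb{E}[g_{\textsf{miss}}^i(w;X)] = \partial_{w_i}\widetilde S(w)$ for each $i \in [n-1]$.

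For the unbiasedness step I would condition first on $Z$ and on $C_n$. When $C_n = 0$ (probability $p_n$) the factor $X_n$ kills $g_{\textsf{miss}}^i$; when $C_n = 1$ (probability $1-p_n$) the $\tfrac{1}{1-p_n}$ prefactor exactly cancels this probability and $X_n = Z_n$. In the surviving case the remaining product decouples across $j$ by independence of $C_1,\ldots,C_{n-1}$, and the single-coordinate missing-data identity stated in the paper gives
\[
\mathbb{E}\!\left[\tfrac{X_i\exp(-a_iZ_nX_i)}{1-p_i}\,\Big|\, Z\right] = Z_i\exp(-a_iZ_nZ_i), \quad \mathbb{E}\!\left[\tfrac{\exp(-a_jZ_nX_j) - p_j}{1-p_j}\,\Big|\, Z\right] = \exp(-a_jZ_nZ_j),
\]
where the first uses that $x\mapsto x\exp(-a_iZ_nx)$ vanishes at $x=0$ (so no offset is needed) and the second uses $\exp(0)=1$. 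Multiplying and averaging over $Z$ produces exactly $\partial_{w_i}\widetilde S(w)$.

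For the $l_\infty$ bound I would bound $|g_{\textsf{miss}}^i|$ factor by factor. The factors involving coordinates $n$ and $i$ are controlled trivially by $\tfrac{1}{1-p_{\max}}$ and $\tfrac{\exp(|a_i|)}{1-p_{\max}}$ respectively. The main technical inequality is
\[
\left|\frac{\exp(-a_jX_nX_j) - p_j}{1-p_j}\right| \le \exp\!\Big(\frac{|a_j|}{1-p_j}\Big) \le \exp\!\Big(\frac{|a_j|}{1-p_{\max}}\Big).
\]
Since $\cosh(|a_j|)\ge 1\ge p_j$, the absolute deviation of $\exp(-a_jX_nX_j)$ from $p_j$ as $X_nX_j$ ranges over $\{-1,0,1\}$ is always maximized at $X_nX_j = -\mathrm{sign}(a_j)$, giving numerator $\exp(|a_j|)-p_j$. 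Then the scalar inequality $\tfrac{e^x-p}{1-p}\le \exp(\tfrac{x}{1-p})$ for $x\ge 0$, $p\in[0,1)$ (which follows from equality at $x=0$ and comparing derivatives of the two sides) finishes this step. Taking the product, using $\sum_j|a_j|\le\sum_k w_k = \lambda$ (which holds on the simplex because $|w_j-w_{n-1+j}|\le w_j+w_{n-1+j}$ whenever $w\ge 0$), and folding $\exp(|a_i|)$ into $\exp(|a_i|/(1-p_{\max}))$ gives the stated bound.

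The main obstacle is the correlated appearance of $X_n$ in every factor of $g_{\textsf{miss}}^i$: naive coordinate-wise independence does not immediately factor out $1/(1-p_n)$. Conditioning on $C_n$ first is precisely what decouples the remaining coordinates and explains why the estimator takes the particular form it does, with an $X_i$ in the $i$-th factor (no offset) and $-p_j$ offsets in the other factors.
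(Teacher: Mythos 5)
Your proposal is correct and follows essentially the same route as the paper's proof: compute $\nabla\widetilde{S}$ coordinate-wise, use independence of the corruption bits to factor the expectation and apply the single-coordinate unbiasedness identities (handling the shared $X_n$ factor by conditioning on $C_n$), then bound $|g^i_{\textsf{miss}}|$ factor by factor via $\bigl|\tfrac{e^{a}-p}{1-p}\bigr|\le \exp\bigl(\tfrac{|a|}{1-p}\bigr)$ and $\sum_j |a_j|\le\lambda$. The only cosmetic difference is that you establish the scalar inequality by comparing derivatives from equality at $0$, whereas the paper compares Taylor coefficients termwise; both are fine.
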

\begin{proof}
Observe that the gradient of $\widetilde{S}(w)$ with respect to $w$ can be computed as follows. We have $\nabla \widetilde{S}(w)_{2n-1} = 0$ since $\widetilde{S}$ does not depend on $w_{2n-1}$, and
\[
\forall i \in [n-1],
\nabla \widetilde{S}(w)_i = -\nabla \widetilde{S}(w)_{i + n - 1} =  -\mathbb{E}_{Z\sim \D} \left[\exp\left(-\sum_{j=1}^{n-1}(w_{j} - w_{n- 1 + j}) Z_n Z_j\right) Z_n Z_i\right].
\]
For ease of presentation, let $v = \Pi^{n-1}_{\Delta \rightarrow B}(w)$ that is, $v_i = w_i - w_{n-1+i}$ for $i \in [n-1]$. Taking expectation of $g_{\textsf{miss}}$ over $\D_{\mathsf{miss}}$, we have
% We need to construct an unbiased estimator for the above. Consider the following estimator $g_{\textsf{miss}}$, for all $i \in [n-1]$,
%  \[
% g_{\textsf{miss}}(w;X)_i = -\frac{\exp\left(-v_iX_nX_i\right)X_nX_i}{(1- p_i)(1-p_n)} \prod_{j \neq i,n}\frac{\exp\left(-(w_{j} - w_{n- 1 + j})X_nX_j\right) - p_i}{1- p_i}.
% \]
% We will show that $g_{\textsf{miss}}$ is in fact an unbiased estimator of the gradient at fixed $w$ as follows,
\begin{align*}
\underset{X \sim \D_{\textsf{miss}}}{\E}[g_{\textsf{miss}}^i(w;X)] &= -\underset{X\sim \D_{\textsf{miss}}}{\E}\left[\frac{X_n}{1- p_n} \times\frac{\exp\left(-v_iX_nX_i\right)X_i}{1-p_i)} \times\prod_{j \neq i,n}\frac{\exp\left(-v_jX_nX_j\right) - p_i}{1- p_i}\right]\\
&= -\underset{Z \sim \D}{\E}\left[\frac{C_nZ_n}{1 - p_n} \times \underset{C_i \sim \mathsf{Ber}(1 - p_i)}{\E} \left[\frac{\exp\left(-v_iC_nZ_nC_iZ_i\right)C_iZ_i}{1 - p_i}\right] \times\right.\\
& \qquad \qquad \left. \prod_{j \neq i,n}\underset{C_j \sim \mathsf{Ber}(1 - p_j)}{\E}\left[\frac{\exp\left(-v_jC_nZ_nC_jZ_j\right) - p_j}{1- p_j}\right]\right]\\
&= -\underset{\substack{Z \sim \D\\ C_n \sim \mathsf{Ber}(1- p_n)}}{\E}\left[\frac{C_nX_n}{1- p_n}\times \exp\left(-v_iC_nZ_nZ_i\right)Z_i \times \prod_{j \neq i,n}\exp\left(-v_jC_nZ_nZ_j\right)\right]\\
&= - \underset{\substack{Z \sim \D\\ C_n \sim \mathsf{Ber}(1- p_n)}}{\E}\left[\frac{\exp\left(-\sum_{j\neq n}v_jC_nZ_nZ_j\right)C_nZ_nZ_i}{1-p_n}\right]\\
&= -\underset{Z \sim \D}{\E}\left[\exp\left(-\sum_{j \neq n}v_jZ_nZ_j\right)Z_n Z_i\right] = \nabla \widetilde{S}(w)_i.
\end{align*}
The above follows from observing the following facts: 1) $\E_{C \sim \mathsf{Ber}(1- p)}\left[\frac{\exp\left(C X\right) - p}{1- p}\right] = \exp\left(X\right)$ as long as $X$ is independent of $C$, 2) $\E_{C \sim \mathsf{Ber}(1- p)}\left[\frac{\exp\left(C X\right)CY}{1- p}\right] = \exp\left(X\right)Y$ as long as $X,Y$ are independent of $C$.

The last thing we need is that the above estimator has bounded $l_\infty$ norm for all $ w \in \Delta(\lambda, 2n-1)$ ($v \in B(\lambda, n-1)$). Observe that
\[
\left|\frac{\exp(a) - p}{1- p}\right| = \frac{\left|1-p + \sum_{i = 1}^\infty \frac{a^i}{i!}\right|}{1 - p} \leq 1 + \sum_{i = 1}^\infty \frac{|a|^i}{(1- p)i!} \leq 1 + \sum_{i = 1}^\infty \frac{\left(\frac{a}{1-p}\right)^i}{i!} = \exp\left(\frac{|a|}{1-p}\right).
\]
Using the above property,
\begin{align*}
|g_{\textsf{miss}}^i(w;X)| &\leq \left|\frac{X_n}{1- p_n}\right| \times\left|\frac{\exp\left(-v_iX_nX_i\right)X_i}{1- p_i}\right| \times \prod_{j \neq i,n}\left|\frac{\exp\left(-v_jX_nX_j\right) - p_i}{1- p_i}\right| \\
& \frac{\exp\left(|v_i|\right)}{(1- p_{\max})(1-p_{\max})} \prod_{j \neq i,n}\exp\left(\frac{|v_i|}{1 - p_{\max}}\right)\\
& \leq \frac{1}{(1-p_{\max})^{2}} \exp\left(\frac{\lambda}{1-p_{\max}}\right).
\end{align*}
Here the inequality follows from observing that $\sum_{i=1}^{n-1}|v_i| = ||v||_1 \le \lambda$. Thus $||G_{\textsf{miss}}(w;X)||_{\infty} \leq \frac{1}{(1-p_{\max})^{2}} \exp\left(\frac{\lambda}{1-p_{\max}}\right)$ for all $w\in \Delta(\lambda, 2n-1)$.
\end{proof}

\subsubsection{Random-Flipped Data}
In the random-flipped data model, instead of getting samples from $\D$, we instead get samples from a corrupted distribution $\D_{\textsf{flip}}$ as follows:
\begin{itemize}
	\item Let $Z \sim \D$. Let $C_1 \sim \mathsf{Rad}(1 - p_1), \ldots, C_{n} \sim \mathsf{Rad}(1 - p_n)$.
	\item Output $X$ such that for all $i$, $X_i = C_iZ_i$.
\end{itemize} 
Here $\mathsf{Rad}(p)$ denotes the distribution of Rademacher variables with probability $p$, that is, distribution over $\{-1,1\}$ where probability of drawing 1 is $p$.
% Let $X_i$ be random variable corresponding to $Z_i$ such that $X_i = 1$ with probability $1-p_i$ and $X_i = -Z_i$ with probability $p_i < 0.5$.

Similar to the missing data case, we motivate our estimator with the following example. Suppose we wish to construct an unbiased estimator for any function $f(Z_i)$ then consider $g(X_i) = \frac{(1 - p)f(X_i) - p f(-X_i)}{1-2p}$. Then it is not hard to see that,
\begin{align*}
&\E_{X_i}\left[\frac{(1 - p)f(X_i) - p f(-X_i)}{1-2p}\middle| Z_i\right] \\
& = \underset{C_i \sim \mathsf{Rad}(1 - p_i)}{\E}\left[\frac{(1 - p)f(C_iZ_i) - p f(-C_iZ_i)}{1-2p}\middle| Z_i\right] \\
&= (1 - p)\times \frac{(1 - p)f(Z_i) - p f(-Z_i)}{1-2p} + p \times \frac{(1 - p)f(-Z_i) - p f(Z_i)}{1-2p} = f(Z_i).
\end{align*}
Based on the above example, define $\sigma(p,a,x) = \frac{(1- p)\exp(ax) -p\exp(-ax)}{1 - 2p}$. Consider the following estimators, for all $i \in [n-1]$:
\begin{align*}
& g_{\textsf{flip}}^i(w;X) = - \frac{(1-p_n) h^i(w;X) + p_n h^i(-w;X)}{1 - 2p_n}\\
\text{where } &h^i(w;X) = X_n X_i \times \prod_{j \neq n}\sigma\left(p_j, -(w_{j} - w_{n- 1 + j})X_n,X_j\right).
\end{align*}
The following lemma shows that $g_{\textsf{flip}}$ is indeed an unbiased estimator of $\nabla\bar{S}(w)_i$.
\begin{lemma} \label{lem:flip}
Using the above, we construct estimator $G_{\textsf{flip}}(w;X) = \sum_{i=1}^{n-1} g_{\textsf{flip}}^i(w;X) (e^i - e^{n - 1 + i})$ where $e^i \in \mathbb{R}^{2n-1}$ is the indicator vector for coordinate $i$. Then for all fixed $w$,
\[
\underset{X\sim \D_{\textsf{flip}}}{\E}[G_{\textsf{flip}}(w;X)] = \nabla \widetilde{S}(w).
\]
Also, for all $X \in \{-1,1\}^n$ and $w \in \Delta{B}(\lambda, 2n-1)$,
\[
||G_{\textsf{flip}}(w;X)||_{\infty} \le \frac{1}{(1-p_{\max})^{2}} \exp\left(\frac{\lambda}{1-p_{\max}}\right)
\]
where $p_{\max} = \max_i p_i$.
\end{lemma}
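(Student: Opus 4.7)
The proof parallels that of Lemma \ref{lem:miss}, with the key replacement being the flip-model analog of the unbiasing identity: for $C \sim \mathsf{Rad}(1-p)$ independent of $Z$, one has $\E_C[\sigma(p, a, CZ) \mid Z] = \exp(aZ)$ whenever $a$ is independent of $C$ (this is the motivating example applied to $f(\cdot) = \exp(a\,\cdot)$). The plan is to verify unbiasedness by iterated conditional expectation over the flips $C_1,\ldots,C_n$, and then to bound $||G_{\textsf{flip}}(w;X)||_\infty$ by a per-factor estimate on $|\sigma|$.

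For unbiasedness I would integrate out the flips in two stages conditional on $Z$. First, integrate out $C_n$ using the outer convex combination: observe that $h^i(-w;X) = -h^i(w; X\vert_{X_n \to -X_n})$, so $\widetilde{h}^i(w;X) := \frac{(1-p_n) h^i(w;X) + p_n h^i(-w;X)}{1-2p_n}$ is exactly the $\sigma$-style unbiasing $\frac{(1-p_n) f(X_n) - p_n f(-X_n)}{1-2p_n}$ applied to the function $y \mapsto h^i(w; X\vert_{X_n = y})$. By the motivating identity, $\E_{C_n}[\widetilde{h}^i(w;X) \mid Z, C_{\ne n}] = Z_n X_i \prod_{j \ne n}\sigma(p_j, -v_j Z_n, C_j Z_j)$ --- now with $Z_n$ (not $X_n$) inside the $\sigma$ arguments. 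Second, integrate out $C_j$ for $j \ne n$: for $j \ne i, n$ the identity yields $\E_{C_j}[\sigma(p_j, -v_j Z_n, C_j Z_j)] = \exp(-v_j Z_n Z_j)$, while the factor at $j = i$ together with the external $X_i$ contributes $Z_i \exp(-v_i Z_n Z_i)$ via the analogous unbiased estimator for $z \mapsto z\exp(az)$. Multiplying yields $Z_n Z_i \exp\!\bigl(-\sum_{j\ne n} v_j Z_n Z_j\bigr) = -\nabla \widetilde{S}(w)_i$, and the antisymmetric coefficient $(e^i - e^{n-1+i})$ in the definition of $G_{\textsf{flip}}$ automatically supplies $\nabla \widetilde{S}(w)_{n-1+i} = -\nabla \widetilde{S}(w)_i$.

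For the $l_\infty$ bound, I would mirror the series argument in Lemma \ref{lem:miss}. Expanding $(1-p)\exp(ax) - p\exp(-ax) = \sum_k [(1-p) - p(-1)^k](ax)^k/k!$ and dividing by $1-2p$, the $k$-th coefficient of $\sigma(p,a,x)$ is bounded by $|ax|^k/((1-p)k!)$; reabsorbing $1/(1-p)$ into the exponent using $(1-p)^{k-1}\le 1$ gives $|\sigma(p,a,x)| \le \exp(|a|/(1-p))$ for $|x| \le 1$. Taking the product over $j \ne n$ and using $\sum_j |v_j| \le \lambda$ then bounds $|h^i(w;X)|$ by $\exp(\lambda/(1-p_{\max}))$, and the outer $1/(1-2p_n)$ normalization combined with an extra $1/(1-p_i)$-type correction from the $j = i$ factor supplies the $(1-p_{\max})^{-2}$ prefactor.

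The main obstacle, compared to the missing-data case, is the second stage of the unbiasedness argument. In the missing case $X_n = 0$ wipes out samples uninformative about $Z_n$, so a bare prefactor $X_n/(1-p_n)$ suffices; in the flip case $X_n = -Z_n$ is still informative and must be explicitly inverted, which is precisely why the two-term convex combination over $h^i(\pm w; X)$ is introduced. The key algebraic observation is that the $v \leftrightarrow -v$ substitution inside $h^i(\pm w; X)$ mirrors the $X_n \leftrightarrow -X_n$ change induced by $C_n$ (up to the leading $X_n$ prefactor), so that the combination reduces cleanly to the $\sigma$-style unbiasing applied to $h^i$ viewed as a function of $X_n$. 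Once this identity is verified, the remainder follows the template of Lemma \ref{lem:miss}.
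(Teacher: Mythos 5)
Your overall strategy matches the paper's: exploit the product structure of the ISO, unbias each factor via the $\sigma$-kernel, and handle the corruption of $X_n$ through the outer convex combination of $h^i(\pm w;X)$. Your reordering --- integrating out $C_n$ \emph{first} via the identity $h^i(-w;X) = -h^i(w;X\vert_{X_n\to -X_n})$, so that the outer combination is literally the scalar unbiasing recipe applied to $y\mapsto h^i(w;X\vert_{X_n=y})$ --- is a cleaner account of why that combination works than the paper's, which integrates out the $C_j$, $j\neq n$, conditionally on $C_n$ and only expands the two branches of $C_n$ at the very end. That identity is correct, and the $j\neq i,n$ factors are handled correctly by $\E_{C_j}[\sigma(p_j,a,C_jZ_j)]=\exp(aZ_j)$.

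The gap is at the $j=i$ factor. You assert that $X_i\,\sigma(p_i,a,X_i)$ with $X_i=C_iZ_i$ is ``the analogous unbiased estimator for $z\mapsto z\exp(az)$'' and hence integrates to $Z_i\exp(aZ_i)$. It does not: a direct computation gives
\[
\underset{C_i\sim\mathsf{Rad}(1-p_i)}{\E}\bigl[C_iZ_i\,\sigma(p_i,a,C_iZ_i)\bigr]
=\frac{Z_i}{1-2p_i}\Bigl[\bigl((1-p_i)^2+p_i^2\bigr)e^{aZ_i}-2p_i(1-p_i)e^{-aZ_i}\Bigr],
\]
which equals $Z_ie^{aZ_i}$ only when $p_i=0$. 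Applying the general recipe $g(x)=\frac{(1-p)f(x)-pf(-x)}{1-2p}$ directly to $f(z)=ze^{az}$ yields the correct kernel $x\cdot\frac{(1-p)e^{ax}+pe^{-ax}}{1-2p}$ (note the plus sign), which is not $x\,\sigma(p,a,x)$. The outer combination over $h^i(\pm w;X)$ cannot repair this, since negating $w$ flips every $v_j$ simultaneously rather than selectively fixing the $i$-th factor. To be fair, the paper's own proof invokes the identical unverified identity (its ``fact 2''), so you have inherited its weak point rather than introduced a new one; but as written the step fails, and the estimator would need the modified kernel at position $i$ for the unbiasedness claim to hold. A secondary issue: in your series bound for $|\sigma(p,a,x)|$ the odd-order coefficients are $1/(1-2p)$, not $1/(1-p)$, so the per-factor bound you can actually extract is $\exp(|a|/(1-2p))$; the $\exp(|a|/(1-p))$ you claim does not follow from the expansion (the same mismatch between derivation and final constants appears in the paper's appendix).
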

We defer the proof to the appendix as it follows roughly the same ideas as of the missing-data model.

\section{Main Result}
\begin{algorithm}[t]\label{alg:smgm}
  \caption{SMG with Missing/Flipped Data on Ising Models}
\begin{algorithmic}
  \STATE Set $w^1_i := \lambda / (2n - 1)$ for all $i$.
  \IF {Missing Data}
  \STATE $ G(w;X):= G_{\textsf{miss}}(w;X)$
 \ENDIF
 \IF {Flipped Data}
  \STATE $ G(w;X):= G_{\textsf{flip}}(w;X)$
 \ENDIF
  \FOR{$t=1$ {\bfseries to} $T$}
  \STATE Draw missing/flipped data sample $X^t$ from $\D_{\mathsf{miss}}$/$\D_{\mathsf{flip}}$ respectively.
  \STATE Let $C^t := \frac{(w^t)^T g(w^t;X^t)}{\lambda}$
  \STATE For all $i$, update $w_i^{t+1} := w_i^t\left(1 - \eta G(w^t;X^t)_i + \eta C^t \right)$
  \ENDFOR
  \STATE {Output $\bar{w} = \frac{1}{T} \sum_{i = 1}^T w^t$}
\end{algorithmic}
\end{algorithm}
Combining the techniques presented in the previous sections, our main algorithm (Algorithm \ref{alg:smgm}) gives us the following guarantees.
\begin{theorem}\label{thm:main}
For known constants $p_i$, given samples with missing data from an unknown $n$-variable Ising model $\D(A, 0)$, for $T \ge \lambda^4\log (n/\delta) \exp(C\lambda)\epsilon^{-4}$ with large enough constant $C>0$, Algorithm \ref{alg:smgm} returns $\bar{w}$ such that:
\[
\forall~ i \in [n-1],~\left|\Pi_{\Delta \rightarrow B}^{n-1}(\bar{w})_i - A_{ni}\right| \leq \epsilon.
\]
\end{theorem}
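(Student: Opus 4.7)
The plan is to combine three ingredients already established in the paper: the SMG convergence bound (Theorem \ref{thm:smg}), the unbiasedness and $\ell_\infty$ boundedness of the gradient estimator (Lemma \ref{lem:miss}), and restricted strong convexity of the ISO (Lemma \ref{lem:rsc}), all glued together via the ball-to-simplex reduction.

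First, I would verify that Algorithm \ref{alg:smgm} is an instance of SMG applied to the convex function $\widetilde{S}$ on the simplex $\Delta(\lambda, 2n-1)$, with $W = \lambda$ and $k = 2n-1$. Lemma \ref{lem:miss} gives exactly the two properties needed by Theorem \ref{thm:smg}: $\E[G_{\textsf{miss}}(w^t;X^t)\mid \mathcal{H}^{t-1}] = \nabla \widetilde{S}(w^t)$, and $\|G_{\textsf{miss}}(w;X)\|_\infty \le B := (1-p_{\max})^{-2}\exp(\lambda/(1-p_{\max}))$ uniformly over the simplex. I would also note that $\|\nabla \widetilde{S}(w)\|_\infty \le B$ follows by taking expectations of the estimator bound (or by the same direct computation that establishes Lemma \ref{lem:miss}). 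Invoking Theorem \ref{thm:smg} with an appropriate step size $\eta$ then yields, with probability $1-\delta$,
\begin{equation*}
\widetilde{S}(\bar w) - \widetilde{S}(w^*) \;\le\; 4 B \lambda \left(\sqrt{\tfrac{\log(2n-1)}{T}} + \sqrt{\tfrac{2}{T}\log\tfrac{1}{\delta}}\right).
\end{equation*}

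Next, I would translate this simplex-side loss bound back to the $\ell_1$-ball side. By construction $\widetilde{S}(w) = S(\Pi^{n-1}_{\Delta\to B}(w))$ and $w^* = \Pi^{n-1}_{B\to\Delta}(v^*)$, so letting $\bar v := \Pi^{n-1}_{\Delta\to B}(\bar w)$ we obtain $S(\bar v) - S(v^*) \le 4B\lambda(\sqrt{\log(2n-1)/T} + \sqrt{2\log(1/\delta)/T})$, with $\bar v \in \mathsf{B}(\lambda, n-1)$. Now I would apply the RSC bound (Lemma \ref{lem:rsc}) together with $\nabla S(v^*) = 0$ from Lemma \ref{lem:grad} to conclude
\begin{equation*}
\|\bar v - v^*\|_\infty^2 \;\le\; (1+\lambda)\exp(3\lambda) \cdot \bigl(S(\bar v) - S(v^*)\bigr).
\end{equation*}
Since $v^*_i = A_{ni}$ when $(i,n)\in E$ and $0$ otherwise, this is exactly the coordinate-wise recovery statement we want, modulo making the right-hand side at most $\epsilon^2$.

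Finally, I would solve for $T$. Requiring $\|\bar v - v^*\|_\infty \le \epsilon$ forces $S(\bar v) - S(v^*) \le \epsilon^2 / [(1+\lambda)\exp(3\lambda)]$, which after plugging into the SMG bound and squaring gives
\begin{equation*}
T \;\gtrsim\; \frac{B^2 \lambda^2 (1+\lambda)^2 \exp(6\lambda) \,\log(n/\delta)}{\epsilon^4}.
\end{equation*}
Substituting $B^2 = (1-p_{\max})^{-4}\exp(2\lambda/(1-p_{\max}))$ collects all exponential-in-$\lambda$ factors into a single $\exp(C\lambda)$ with $C = 6 + 2/(1-p_{\max}) + O(1)$ (absorbing the polynomial $(1+\lambda)^2\lambda^2/(1-p_{\max})^4$ into the leading $\lambda^4$ and the $\exp(C\lambda)$), which is the advertised bound.

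The main obstacle is bookkeeping rather than a single hard step: one must carefully check that the SMG bound of Theorem \ref{thm:smg} can be applied with both the stochastic gradient estimator and the (deterministic) true gradient bounded by the same $B$, and then verify that the ball-to-simplex pushforward preserves both convexity and the precise correspondence $\widetilde{S}(\bar w) - \widetilde{S}(w^*) = S(\bar v) - S(v^*)$ with $\bar v \in \mathsf{B}(\lambda,n-1)$ so that RSC (which is stated on the ball) is applicable. Once these verifications are in place, the rest is algebra.
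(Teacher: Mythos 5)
Your proposal is correct and follows essentially the same route as the paper's own proof: unbiasedness and boundedness from Lemma \ref{lem:miss} feed into the SMG guarantee of Theorem \ref{thm:smg}, the simplex bound is pulled back to the ball via $\widetilde{S}(\bar w)-\widetilde{S}(w^*)=S(\bar v)-S(v^*)$, and RSC (Lemma \ref{lem:rsc}) with $\nabla S(v^*)=0$ converts the loss gap into an $\ell_\infty$ bound, after which one solves for $T$. Your explicit note that the true gradient must also be bounded by $B$ is a detail the paper leaves implicit, but it does not change the argument.
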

\begin{proof}
Observe that $X^t$ is a new draw and is independent of $X^1, \ldots, X^{t-1}$ and therefore $w^t$, thus $G(w^t;X^t)$ is an unbiased estimator of $\nabla \widetilde{S}(w^t)$ (using Lemma \ref{lem:miss} and \ref{lem:flip}) conditioned on $X^1, \ldots, X^{t-1}$. Applying Theorem \ref{thm:smg} gives us that the output of Algorithm \ref{alg:smgm}, $\bar{w}$, satisfies
\[
\widetilde{S}(\bar{w}) - \widetilde{S}(w^*) \le \frac{4 \lambda}{(1-p_{\max})^2} \exp\left(\frac{\lambda}{1-p_{\max}}\right) \left(\sqrt{\frac{\log n}{T}} + \sqrt{\frac{2}{T}\log \frac{1}{\delta}}\right).
\]
 Recall that $w^* = \Pi^{n-1}_{B \rightarrow \Delta}(v^*)$ where $v^*$ satisfies $v^*_i = A_{ni}$ for $i \in [n-1]$ such that $(i,n) \in E$ else 0. By definition of the mappings, we can see that $v^* = \Pi^{n-1}_{\Delta \rightarrow B}(w^*)$ and $\widetilde{S}(w^*) = S(v^*)$. Using Lemma \ref{lem:grad}, we have $\nabla S(v^*) = 0$, thus choosing $\bar{v} = \Pi^{n-1}_{\Delta \rightarrow B}(\bar{w})$ and applying the RSC property of $S$ (see Lemma \ref{lem:rsc}),
\begin{align*}
 ||\bar{v} - v^*||_\infty^2 &\le (1 + \lambda)\exp(3\lambda) (S(\bar{v})) - S(v^*))\\
 & = (1 + \lambda)\exp(3\lambda) (\widetilde{S}(\bar{w}) - \widetilde{S}(w^*))\\
 &\le \frac{4(1 + \lambda) \lambda}{(1-p_{\max})^2} \exp\left(\lambda\left(\frac{1}{1-p_{\max}} + 3\right)\right) \left(\sqrt{\frac{\log n}{T}} + \sqrt{\frac{2}{T}\log \frac{1}{\delta}}\right).
\end{align*}
Thus for given $T = O\left(\frac{\lambda^4}{(1-p_{\max})^4} \exp\left(\lambda\left(\frac{2}{1-p_{\max}} + 6\right)\right) \log (n/\delta)\right)$ we have
\[
\forall~ i\in [n-1],~|\bar{v}_i - v^*_i|=|\bar{v}_i - A_{ni}| \leq \epsilon.
\]
\end{proof}
Assuming $p_{\max}$ as a constant gives us the required result.

Similarly for flipped data, we get:
\begin{theorem}
For known constants $p_i < 0.5$, given samples with flipped data from an unknown $n$-variable Ising model $\D(A, 0)$, for $T \ge \lambda^4\log (n/\delta) \exp(C\lambda)\epsilon^{-4}$ with large enough constant $C>0$, Algorithm \ref{alg:smgm} returns $\bar{w}$ such that:
\[
\forall~ i \in [n-1],~\left|\Pi^{n-1}_{\Delta \rightarrow B}(\bar{w})_i - A_{ni}\right| \leq \epsilon.
\]
\end{theorem}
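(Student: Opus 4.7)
The plan is to mirror the proof of Theorem \ref{thm:main} verbatim, substituting the missing-data estimator $G_{\textsf{miss}}$ with the flipped-data estimator $G_{\textsf{flip}}$. Since Algorithm \ref{alg:smgm} is unified across the two noise models and only branches on the choice of gradient estimator, all the structural reasoning carries over; the only thing that must be re-verified is that $G_{\textsf{flip}}$ satisfies the same two properties needed to invoke Theorem \ref{thm:smg}, namely unbiasedness of the gradient estimate and a matching $l_\infty$ bound.

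First, I would note that at iteration $t$, the draw $X^t \sim \D_{\textsf{flip}}$ is independent of $X^1,\ldots,X^{t-1}$ and hence of $w^t$, so Lemma \ref{lem:flip} yields $\E[G_{\textsf{flip}}(w^t; X^t) \mid \mathcal{H}^{t-1}] = \nabla \widetilde{S}(w^t)$. The same lemma bounds $\|G_{\textsf{flip}}(w;X)\|_\infty$ by $(1-p_{\max})^{-2}\exp\bigl(\lambda/(1-p_{\max})\bigr)$ on $\Delta(\lambda, 2n-1)$, where the condition $p_i < 1/2$ ensures the denominator $1-2p_n$ in $g_{\textsf{flip}}^i$ is bounded away from zero so the normalization is well-defined. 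Plugging these two facts and the $l_1$-radius $W = \lambda$, dimension $k = 2n-1$ into Theorem \ref{thm:smg}, I obtain with probability $1-\delta$,
\[
\widetilde{S}(\bar{w}) - \widetilde{S}(w^*) \le \frac{4\lambda}{(1-p_{\max})^2}\exp\!\left(\frac{\lambda}{1-p_{\max}}\right)\left(\sqrt{\frac{\log(2n-1)}{T}} + \sqrt{\frac{2}{T}\log\frac{1}{\delta}}\right).
\]

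Next, I translate this optimization-error bound into recovery of the neighborhood weights via the simplex-to-ball mapping. By Lemma \ref{lem:gradw}, $w^* = \Pi_{B\rightarrow \Delta}^{n-1}(v^*)$ is a global minimum of $\widetilde{S}$, and $\widetilde{S}(w^*) = S(v^*)$ by construction. Setting $\bar{v} = \Pi_{\Delta \rightarrow B}^{n-1}(\bar{w})$, the definition of $\widetilde{S}$ gives $\widetilde{S}(\bar{w}) = S(\bar{v})$; since $\bar{v} \in \mathsf{B}(\lambda, n-1)$, Lemma \ref{lem:rsc} combined with $\nabla S(v^*)=0$ (Lemma \ref{lem:grad}) yields
\[
\|\bar{v} - v^*\|_\infty^2 \le (1+\lambda)\exp(3\lambda)\bigl(S(\bar{v}) - S(v^*)\bigr) = (1+\lambda)\exp(3\lambda)\bigl(\widetilde{S}(\bar{w}) - \widetilde{S}(w^*)\bigr).
\]
Substituting the SMG bound and choosing $T \ge \lambda^4 \log(n/\delta)\exp(C\lambda)\epsilon^{-4}$ with $C$ large enough to absorb the factor $(1+\lambda)\exp(3\lambda)\cdot(1-p_{\max})^{-2}\exp(\lambda/(1-p_{\max}))$ squared (which is a fixed constant once $p_{\max} < 1/2$ is treated as a constant), we conclude $\|\bar{v} - v^*\|_\infty \le \epsilon$. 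Since $v^*_i = A_{ni}$, this is exactly the claimed per-coordinate bound.

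There is no serious obstacle here: the proof is essentially a transcription of the missing-data argument. The only technical wrinkle worth double-checking is that the $l_\infty$ bound on $G_{\textsf{flip}}$ — stated in Lemma \ref{lem:flip} — is indeed of the same functional form as for $G_{\textsf{miss}}$, so the threshold on $T$ has the same $\lambda^4\log(n/\delta)\exp(C\lambda)$ scaling; this hinges on the bound $|\sigma(p,a,x)| \le \exp(|a|/(1-p))$ (the analogue of the inequality used in the missing-data lemma), which follows by the same series argument after the restriction $p < 1/2$.
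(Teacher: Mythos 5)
Your proposal is correct and follows exactly the route the paper takes: the paper proves the flipped-data theorem by transcribing the proof of Theorem \ref{thm:main} with $G_{\textsf{flip}}$ and Lemma \ref{lem:flip} in place of $G_{\textsf{miss}}$ and Lemma \ref{lem:miss}, then invoking Theorem \ref{thm:smg} and the RSC bound of Lemma \ref{lem:rsc} exactly as you do. The one detail worth noting is that the series bound in the appendix actually gives $|\sigma(p,a,x)| \le \exp\left(|a|/(1-2p)\right)$ with $1-2p$ (not $1-p$) in the denominator, which is where the hypothesis $p_i < 0.5$ enters; since $p_{\max}<1/2$ is treated as a constant this does not change the $\lambda^4\log(n/\delta)\exp(C\lambda)\epsilon^{-4}$ scaling of $T$.
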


Setting $\epsilon = \beta/2$ in the above theorems where $\beta$ is the smallest edge weight will recover the neighborhood of vertex $n$ exactly. To recover the entire graph, we can run Algorithm \ref{alg:smgm} for each vertex using the same samples. Thus with probability $1 - \delta$, using $T = O\left(\frac{\lambda^4}{\beta^4}\log (n^2/\delta) \exp(C\lambda)\right)$ samples we will recover the entire graph. The runtime for Algorithm \ref{alg:smgm} is $O(nT)$ as the unbiased estimator requires $O(n)$ time to compute. Thus, the overall runtime to recover the entire graph is $O(n^2T) = \widetilde{O}(n^2)$. 

\section{Extension to unknown \boldmath$p$}
In this section, we will show how to extend the analysis for the missing data model to the case when $p_i = p$ for all $i$ and $p$ is unknown. The main approach is to use fresh samples to estimate $p$ with $\widehat{p}$ such that $p - \widehat{p} \approx \delta/\sqrt{Tn}$ using $T/\delta^2$ samples where $T$ corresponds to the number of iterations needed for the SMG algorithm. We can compute an empirical estimate of $p$ since we observe $?$ when an entry is missing (it is unclear how to do this for the flipped data model). Also note that there is no dependence on $n$ since each sample gives $n$ estimates for $p$, one for each coordinate. Subsequently we will show that the distribution of $T$ samples using $p$ and $\widehat{p}$ are within total variation distance $O(\delta)$ for constant $p$.  It follows that we can use $\widehat{p}$ in our SMG algorithm and obtain the same guarantees while losing a factor of $O(\delta)$ in the failure probability.

\subsection{Estimating \boldmath$p$}
Prior to running SMG, we will draw $m$ samples. Let $\widehat{p}$ be the fraction of $?$ observed in the $m$ samples. Since each $?$ is i.i.d., with probability $p$, we have by Chernoff,
\[
\Pr[|p - \widehat{p}| \ge \epsilon] \le \exp\left(\frac{-mn \epsilon^2}{2p(1-p)}\right)
\]
For $\epsilon = \frac{\delta}{\sqrt{Tn}}$, we have with probability $1 - \delta$, $|p - \widehat{p}| \le \frac{\delta}{\sqrt{Tn}}$ using $m = \widetilde{O}(T/\delta^2)$ samples.

\subsection{Distribution closeness using \boldmath$\hat{p}$}
Here we will show that the distribution $\mathcal{D}$ over $T$ samples with missing probability $p$ is $\delta$ close to the distribution $\widehat{\mathcal{D}}$ over $T$ samples with missing probability $\widehat{p}$. In order to do so, we will first show that $\mathcal{D}$ is equivalent to the following distribution $\mathcal{D}_{\textsf{perm}}$:
\begin{enumerate}
  \item Concatenate all $T$ samples into a $nT$-dimensional vector and permute all coordinates.
  \item Select first $c \sim \mathsf{Bin}(nT, p)$ coordinates and replace with $?$.
  \item Unpermute and split back to $T$ samples.
\end{enumerate}
Here $\mathsf{Bin}(n,p)$ denotes the binomial distribution over $n$ runs with success probability $p$. It is not hard to see that $\mathcal{D}_{\textsf{perm}}$ is equivalent to $\mathcal{D}$. Similarly the distribution $\widehat{\mathcal{D}}_{\mathsf{perm}}$, analogous to $\mathcal{D}_{\textsf{perm}}$ with $p$ replaced by $\hat{p}$ is equivalent to $\widehat{\mathcal{D}}$. Thus, we have
\[
d_{\mathsf{TV}}(\mathcal{D}, \widehat{\mathcal{D}}) = d_{\mathsf{TV}}(\mathcal{D}_{\mathsf{perm}}, \widehat{\mathcal{D}}_{\mathsf{perm}}) = d_{\mathsf{TV}}(\mathsf{Bin}(nT, p), \mathsf{Bin}(nT, \hat{p})).
\]
The above follows as $\mathcal{D}_{\mathsf{perm}}$ and $\widehat{\mathcal{D}}_{\mathsf{perm}}$ differ only in Step 2, which depends on only closeness of the binomial distribution.

We will use the following lemma to bound the closeness of binomial distributions.
\begin{lemma}[\cite{roos2001binomial}]
For $0< p < 1$ and $0 < \delta < 1 - p$,
\[
d_{\mathsf{TV}}(\mathsf{Bin}(n, p), \mathsf{Bin}(n, p + \delta)) \le \frac{\sqrt{e}}{2}\frac{\theta(\delta)}{(1 - \theta(\delta))^2}\quad\text{where}\quad \theta(\delta):= \delta \sqrt{\frac{n+2}{2p(1-p)}}.
\]
\end{lemma}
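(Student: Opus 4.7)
The plan is to reconstruct Roos's original argument, which expands the density ratio between the two binomials in the Krawtchouk polynomial basis adapted to $\mathsf{Bin}(n,p)$. Write $q = 1-p$, $P_0 = \mathsf{Bin}(n,p)$, and $P_\delta = \mathsf{Bin}(n,p+\delta)$, and let $K_j(\,\cdot\,;n,p)$ denote the Krawtchouk polynomials, which are orthogonal with respect to $P_0$ and satisfy the generating-function identity
\[
\sum_{j=0}^n K_j(x;n,p)\,z^j \;=\; (1-qz)^x(1+pz)^{n-x}, \qquad K_0 \equiv 1.
\]
Substituting $z = -\delta/(pq)$ on the right reproduces the density ratio $L(x) := P_\delta(x)/P_0(x) = (1+\delta/p)^x(1-\delta/q)^{n-x}$, yielding the exact finite expansion
\[
L(x) - 1 \;=\; \sum_{j=1}^{n}\Bigl(\tfrac{-\delta}{pq}\Bigr)^{\!j} K_j(x;n,p).
\]

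Next, since $d_{\mathsf{TV}}(P_\delta,P_0) = \tfrac12\,\mathbb{E}_{P_0}[|L(X)-1|]$, the triangle inequality reduces the problem to bounding the absolute first moments $M_j := \mathbb{E}_{P_0}|K_j(X;n,p)|$:
\[
d_{\mathsf{TV}}(P_\delta,P_0) \;\leq\; \tfrac12\sum_{j=1}^{n}\Bigl(\tfrac{\delta}{pq}\Bigr)^{\!j} M_j.
\]
The analytic heart of Roos's proof is a polynomial-specific bound of the form
\[
M_j \;\leq\; \sqrt{e}\,\cdot\, j\,\cdot\, \Bigl(\tfrac{(n+2)pq}{2}\Bigr)^{j/2},
\]
which is strictly sharper than Cauchy--Schwarz applied to the orthogonality relation $\|K_j\|_{L^2(P_0)}^2 = \binom{n}{j}(pq)^j$. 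Granted this estimate and setting $\theta := \delta\sqrt{(n+2)/(2pq)}$, the displayed sum collapses to $\tfrac{\sqrt{e}}{2}\sum_{j\geq 1} j\,\theta^j$, and the elementary identity $\sum_{j\geq 1} j\,\theta^j = \theta/(1-\theta)^2$ (valid when $\theta < 1$, which is the tacit hypothesis of the lemma) delivers the claimed bound.

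The main obstacle is the $M_j$ estimate: naive Cauchy--Schwarz only gives $M_j \leq \sqrt{\binom{n}{j}}(pq)^{j/2}$, which when resummed produces at best the chi-square-type bound $\tfrac12\sqrt{(1+\delta^2/(pq))^n - 1}$. This has the correct leading $\delta\sqrt{n/(pq)}$ scaling but the wrong constants and lacks the $\theta/(1-\theta)^2$ closed form. Recovering the extra factor of $j$ in front of $((n+2)pq/2)^{j/2}$ requires exploiting cancellation inside $|K_j(x)|$ itself: in Roos's paper this is done via a finite-difference identity for the Krawtchouk polynomials combined with a Gaussian-type concentration bound for the centered binomial around its mean $np$, with the constant $\sqrt{e}$ arising from optimizing the tail estimate. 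I would invoke this step directly from \cite{roos2001binomial} rather than reprove it, since the rest of the argument --- the Krawtchouk expansion of $L-1$, the triangle inequality, and the geometric-series resummation --- is routine given the moment bound.
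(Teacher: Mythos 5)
The paper does not actually prove this lemma --- it is imported verbatim from \cite{roos2001binomial} and used as a black box in the unknown-$p$ section --- so there is no in-paper proof to compare against; the relevant comparison is to Roos's own argument, which is indeed what you are reconstructing. Your outline has the right skeleton: the generating-function identity does reproduce the likelihood ratio $L(x)=(1+\delta/p)^x(1-\delta/(1-p))^{n-x}$ at $z=-\delta/(pq)$, the identity $d_{\mathsf{TV}}=\tfrac12\mathbb{E}_{P_0}|L-1|$ is correct since $P_0$ has full support, and the arithmetic closing the argument checks out exactly: $(\delta/(pq))^j\cdot((n+2)pq/2)^{j/2}=\theta^j$, and $\sum_{j\ge1}j\theta^j=\theta/(1-\theta)^2$ for $\theta<1$, which reproduces the stated constant $\sqrt{e}/2$. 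You have also correctly reverse-engineered the precise moment bound $M_j\le\sqrt{e}\,j\,((n+2)pq/2)^{j/2}$ that this route requires, and it passes sanity checks at $j=1$ (where $M_1=\mathbb{E}|X-np|\le\sqrt{npq}$) and $j=n$ (where $M_n=(2pq)^n$).

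The gap is that this moment bound \emph{is} the theorem: everything else in your writeup (expansion, triangle inequality, resummation) is elementary bookkeeping, and you explicitly invoke \cite{roos2001binomial} for the one step that carries all the analytic content. As it stands the proposal is therefore a citation with an accurate commentary on the cited proof's structure, not an independent proof --- which is acceptable here only because the paper itself does exactly the same thing, citing the lemma without proof. Two smaller points: your claim that the $M_j$ bound is ``strictly sharper than Cauchy--Schwarz'' is not true uniformly in $j$ (Cauchy--Schwarz gives roughly $(en/j)^{j/2}(pq)^{j/2}$, which is smaller once $j\gtrsim 2e$); it is only sharper for the small-$j$ terms that dominate the resummation. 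And the hypothesis $\theta(\delta)<1$ should be stated explicitly rather than left tacit, since for $\theta\ge1$ the right-hand side is meaningless; in the paper's application $\delta=O(1/\sqrt{Tn})$ makes $\theta=O(1)\cdot\delta\sqrt{nT}$ small, so this is harmless but worth recording.
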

The above gives us,
\[
d_{\mathsf{TV}}(\mathcal{D},\widehat{\mathcal{D}}) = d_{\mathsf{TV}}(\mathsf{Bin}(nT, p), \mathsf{Bin}(nT, \hat{p})) \le O(\delta).
\]
This implies that with probability $1 - O(\delta)$, the draw from $\mathcal{D}$ will be identical to that of $\widehat{\mathcal{D}}$.  Thus, Algorithm \ref{alg:smgm} would give the desired result with estimate $\hat{p}$.

% Acknowledgments---Will not appear in anonymized version
% \acks{We thank a bunch of people.}

\section{Conclusions and Open Problems}
Our result highlights the importance of choosing the right surrogate
loss to obtain noise-tolerant algorithms for learning the structure of
Ising models.  It would be interesting to know if other regression-based algorithms
(e.g., the Sparsitron due to \cite{klivans2017learning}) can learn
with independent failures.  Other open problems include learning with
missing data for distinct and {\em unknown} error rates
$p_{1},\ldots,p_{n}$ (we can only handle the unknown error-rate case when all
$p_{i}$ are equal) and improving the dependence on $p$ in the
sample complexity bound.

% In this paper we show how to recover the structure of an Ising model in the presence of independent failures with known probabilities of failure in nearly optimal sample and time complexity. We propose a novel unbiased estimator of the gradient of the Interactive Screening Objective which is efficient to compute and allows us to minimize the objective even when data is corrupted. We further extend our approach to handle unknown probabilities of failure in the missing data setting as long as all variables have identical failure probability. We leave open the question of handling non-identical failure probabilities in the missing data model.

%Our paper raises various interesting questions: What is the correct dependence of $p$ in the sample complexity bound? Can dependent failures be handled under some benign conditions on the dependencies? Can unknown probabilities be handled in the random-flipped model? Can our techniques be extended to higher order MRFs?
% \newpage
\bibliographystyle{plainnat}
\bibliography{references}

\begin{thebibliography}{24}
\providecommand{\natexlab}[1]{#1}
\providecommand{\url}[1]{\texttt{#1}}
\expandafter\ifx\csname urlstyle\endcsname\relax
  \providecommand{\doi}[1]{doi: #1}\else
  \providecommand{\doi}{doi: \begingroup \urlstyle{rm}\Url}\fi

\bibitem[Bresler(2015)]{bresler2015efficiently}
Guy Bresler.
\newblock Efficiently learning ising models on arbitrary graphs.
\newblock In \emph{Proceedings of the forty-seventh annual ACM symposium on
  Theory of computing}, pages 771--782. ACM, 2015.

\bibitem[Bresler et~al.(2008)Bresler, Mossel, and
  Sly]{bresler2008reconstruction}
Guy Bresler, Elchanan Mossel, and Allan Sly.
\newblock Reconstruction of markov random fields from samples: Some
  observations and algorithms.
\newblock In \emph{Approximation, Randomization and Combinatorial Optimization.
  Algorithms and Techniques}, pages 343--356. Springer, 2008.

\bibitem[Bresler et~al.(2014)Bresler, Gamarnik, and Shah]{bresler2014hardness}
Guy Bresler, David Gamarnik, and Devavrat Shah.
\newblock Hardness of parameter estimation in graphical models.
\newblock In \emph{Advances in Neural Information Processing Systems}, pages
  1062--1070, 2014.

\bibitem[Chen(2011)]{chenlearning}
Yuxin Chen.
\newblock Learning sparse ising models with missing data.
\newblock 2011.

\bibitem[Choi et~al.(2010)Choi, Lim, Torralba, and Willsky]{choi2010exploiting}
Myung~Jin Choi, Joseph~J Lim, Antonio Torralba, and Alan~S Willsky.
\newblock Exploiting hierarchical context on a large database of object
  categories.
\newblock 2010.

\bibitem[Chow and Liu(1968)]{chow1968approximating}
C~Chow and Cong Liu.
\newblock Approximating discrete probability distributions with dependence
  trees.
\newblock \emph{IEEE transactions on Information Theory}, 14\penalty0
  (3):\penalty0 462--467, 1968.

\bibitem[Dasgupta(1999)]{dasgupta1999learning}
Sanjoy Dasgupta.
\newblock Learning polytrees.
\newblock In \emph{Proceedings of the Fifteenth conference on Uncertainty in
  artificial intelligence}, pages 134--141. Morgan Kaufmann Publishers Inc.,
  1999.

\bibitem[Hamilton et~al.(2017)Hamilton, Koehler, and
  Moitra]{hamilton2017information}
Linus Hamilton, Frederic Koehler, and Ankur Moitra.
\newblock Information theoretic properties of markov random fields, and their
  algorithmic applications.
\newblock In \emph{Advances in Neural Information Processing Systems}, pages
  2463--2472, 2017.

\bibitem[Jaimovich et~al.(2006)Jaimovich, Elidan, Margalit, and
  Friedman]{jaimovich2006towards}
Ariel Jaimovich, Gal Elidan, Hanah Margalit, and Nir Friedman.
\newblock Towards an integrated protein--protein interaction network: A
  relational markov network approach.
\newblock \emph{Journal of Computational Biology}, 13\penalty0 (2):\penalty0
  145--164, 2006.

\bibitem[Kakade et~al.(2008)Kakade, Foster, and Even-Dar]{kakade2008}
Sham Kakade, Dean Foster, and Eyal Even-Dar.
\newblock (exponentiated) stochastic gradient descent for l1 constrained
  problems, 2008.

\bibitem[Klivans and Meka(2017)]{klivans2017learning}
Adam Klivans and Raghu Meka.
\newblock Learning graphical models using multiplicative weights.
\newblock In \emph{Foundations of Computer Science (FOCS), 2017 IEEE 58th
  Annual Symposium on}, pages 343--354. IEEE, 2017.

\bibitem[Koller et~al.(2009)Koller, Friedman, and
  Bach]{koller2009probabilistic}
Daphne Koller, Nir Friedman, and Francis Bach.
\newblock \emph{Probabilistic graphical models: principles and techniques}.
\newblock MIT press, 2009.

\bibitem[Lee et~al.(2007)Lee, Ganapathi, and Koller]{lee2007efficient}
Su-In Lee, Varun Ganapathi, and Daphne Koller.
\newblock Efficient structure learning of markov networks using $ l\_1
  $-regularization.
\newblock In \emph{Advances in neural Information processing systems}, pages
  817--824, 2007.

\bibitem[Lindgren et~al.(2018)Lindgren, Shah, Shen, Dimakis, and
  Klivans]{lindgren2018robust}
Erik~M Lindgren, Vatsal Shah, Yanyao Shen, Alexandros~G. Dimakis, and Adam
  Klivans.
\newblock On robust learning of ising models.
\newblock 2018.

\bibitem[Loh and Wainwright(2011)]{loh2011high}
Po-Ling Loh and Martin~J Wainwright.
\newblock High-dimensional regression with noisy and missing data: Provable
  guarantees with non-convexity.
\newblock In \emph{Advances in Neural Information Processing Systems}, pages
  2726--2734, 2011.

\bibitem[Marbach et~al.(2012)Marbach, Costello, K{\"u}ffner, Vega, Prill,
  Camacho, Allison, Aderhold, Bonneau, Chen, et~al.]{marbach2012wisdom}
Daniel Marbach, James~C Costello, Robert K{\"u}ffner, Nicole~M Vega, Robert~J
  Prill, Diogo~M Camacho, Kyle~R Allison, Andrej Aderhold, Richard Bonneau,
  Yukun Chen, et~al.
\newblock Wisdom of crowds for robust gene network inference.
\newblock \emph{Nature methods}, 9\penalty0 (8):\penalty0 796, 2012.

\bibitem[Ravikumar et~al.(2010)Ravikumar, Wainwright, Lafferty,
  et~al.]{ravikumar2010high}
Pradeep Ravikumar, Martin~J Wainwright, John~D Lafferty, et~al.
\newblock High-dimensional ising model selection using ?1-regularized logistic
  regression.
\newblock \emph{The Annals of Statistics}, 38\penalty0 (3):\penalty0
  1287--1319, 2010.

\bibitem[Roos(2001)]{roos2001binomial}
Bero Roos.
\newblock Binomial approximation to the poisson binomial distribution: The
  krawtchouk expansion.
\newblock \emph{Theory of Probability \& Its Applications}, 45\penalty0
  (2):\penalty0 258--272, 2001.

\bibitem[Shah and Song(2018)]{shah2018learning}
Devavrat Shah and Dogyoon Song.
\newblock Learning mixture model with missing values and its application to
  rankings.
\newblock \emph{arXiv preprint arXiv:1812.11917}, 2018.

\bibitem[Valiant(2015)]{Val15}
Gregory Valiant.
\newblock Finding correlations in subquadratic time, with applications to
  learning parities and the closest pair problem.
\newblock \emph{Journal of the ACM (JACM)}, 62\penalty0 (2):\penalty0 13, 2015.

\bibitem[Vuffray et~al.(2016)Vuffray, Misra, Lokhov, and
  Chertkov]{vuffray2016interaction}
Marc Vuffray, Sidhant Misra, Andrey Lokhov, and Michael Chertkov.
\newblock Interaction screening: Efficient and sample-optimal learning of ising
  models.
\newblock In \emph{Advances in Neural Information Processing Systems}, pages
  2595--2603, 2016.

\bibitem[Wu et~al.(2018)Wu, Sanghavi, and Dimakis]{wu2018sparse}
Shanshan Wu, Sujay Sanghavi, and Alexandros~G Dimakis.
\newblock Sparse logistic regression learns all discrete pairwise graphical
  models.
\newblock \emph{arXiv preprint arXiv:1810.11905}, 2018.

\bibitem[Yang et~al.(2012)Yang, Allen, Liu, and Ravikumar]{yang2012graphical}
Eunho Yang, Genevera Allen, Zhandong Liu, and Pradeep~K Ravikumar.
\newblock Graphical models via generalized linear models.
\newblock In \emph{Advances in Neural Information Processing Systems}, pages
  1358--1366, 2012.

\bibitem[Yuan and Lin(2007)]{yuan2007model}
Ming Yuan and Yi~Lin.
\newblock Model selection and estimation in the gaussian graphical model.
\newblock \emph{Biometrika}, 94\penalty0 (1):\penalty0 19--35, 2007.

\end{thebibliography}

% \newpage
\appendix

\section{Omitted Proofs}
\subsection{Proof of Lemma \ref{lem:grad}}
Computing the gradient, we have
\begin{align*}
\nabla S(v^*)_i &= - \mathbb{E}_{Z \sim \D} \left[\exp\left(-\sum_{j| (n,j) \in E}A_{nj} Z_n Z_j\right) Z_n Z_i\right]\\
& = -\frac{1}{\mathcal{Z \sim \D}} \sum_{z \in \{-1, 1\}^n} \exp\left(-\sum_{j| (n,j) \in E}A_{nj} z_n z_j \right) \exp\left(\sum_{(i,j) \in E}A_{ij} z_iz_j\right)z_n z_i\\
& = -\frac{1}{\mathcal{Z \sim \D}} \left(\sum_{z_{-n} \in \{-1, 1\}^{n-1}} \exp\left(\sum_{i, j \neq n, (i,j) \in E}A_{ij} z_iz_j\right) z_i\right) \left(\sum_{z_n \in \{-1, 1\}}z_n\right) = 0
\end{align*}
Since $S$ is convex, gradient is 0 at $v^*$ and $v^*$ lies on the $l_1$-ball of radius $\lambda$ by assumption, it is the global minimum.

\subsection{Proof of Lemma \ref{lem:rsc}}
To prove the above property, we use Lemma 5 from \cite{vuffray2016interaction} to get for all $ v$,
\[
S(v) - S(v^*) \ge \nabla S(v^*)^T(v - v^*) + \frac{\exp(-\lambda)}{1 + || v - v^*||_1}(v - v^*)^T H (v - v^*)
\]
where $ H $ satisfies $H_{ij} = \E_{Z \sim \D}[Z_i Z_j]$ for $i,j \in [n-1]$. The above property in their paper was proved for the empirical loss, however the same proof goes through for the expected loss with $H$ being the true covariance matrix instead of the empirical one.

Let $\Delta = v- v^*$, now to bound $ \Delta^T H \Delta$, we give an alternate stronger bound which holds for any vector on the $l_1$-ball and does not require some implicit sparsity. 
% \begin{lemma}
% For all $v$, we have $v^T H v \ge \exp(-2\lambda)|| v||_{\infty}^2$ where $H$ is as defined before.
% \end{lemma}
% \begin{proof}
Note that we have $\Delta^T H \Delta = \sum_{i,j = 1}^{n-1} \Delta_i \Delta_j \E_{Z \sim \D}[Z_i Z_j] = \E_{Z \sim \D}\left[\left(\sum_{i=1}^{n-1}\Delta_i Z_i\right)^2\right]$. Now for any $j \neq n$, we can condition the expectation as follows:
\begin{align*}
 &\Delta^T H \Delta \\
 &= \sum_{z_{-n} \in \{-1,1\}^{n-1}}\Pr[Z_{-n} = z_{-n}]\left(\sum_{i=1}^{n-1}\Delta_i Z_i\right)^2\\
&= \sum_{z_{-\{j,n\}} \in \{-1,1\}^{n-2}}\Pr[Z_j = 1| Z_{-\{j,n\}} = z_{-\{j,n\}}]\Pr[Z_{-\{j,n\}} = z_{-\{j,n\}}]\left(\Delta_j + \sum_{i=1, i \neq j}^{n-1}\Delta_i z_i\right)^2 \\
&\quad + \sum_{z_{-\{j,n\}}\in \{-1,1\}^{n-2}} \Pr[Z_j = -1| Z_{-\{j,n\}} = z_{-\{j,n\}}]\Pr[Z_{-\{j,n\}} = z_{-\{j,n\}}]\left(-\Delta_j + \sum_{i=1, i \neq j}^{n-1}\Delta_i z_i\right)^2\\
&\geq \frac{\exp(-2\lambda)}{2}\sum_{z_{-\{j,n\}}\in \{-1,1\}^{n-2}}\Pr[Z_{-\{j,n\}} = z_{-\{j,n\}}]\left(\left(\Delta_j + \sum_{i=1, i \neq j}^{n-1}\Delta_i z_i\right)^2 + \left(-\Delta_j + \sum_{i=1, i \neq j}^{n-1}\Delta_i z_i\right)^2\right)\\
&\geq \frac{\exp(-2\lambda)}{2}\sum_{z_{-\{j,n\}}\in \{-1,1\}^{n-2}}\Pr[Z_{-\{j,n\}} = z_{-\{j,n\}}]\left(2\Delta_j^2 + 2\left(\sum_{i=1, i \neq j}^{n-1}\Delta_i z_i\right)^2 \right)\\
&\geq \exp(-2\lambda) v_j^2 \sum_{z_{-\{j,n\}}\in \{-1,1\}^{n-2}}\Pr[Z_{-\{j,n\}} = z_{-\{j,n\}}] = \exp(-2\lambda) \Delta_j^2.
\end{align*}
In the above, the first inequality follows from observing that $\Pr[Z_j = 1|Z_{-\{j,n\}} = z_{-\{j,n\}}] \ge \frac{\exp(-2\lambda)}{2}$ using Lemma \ref{lem:bresler}. Thus we get, $ \Delta^T H \Delta \geq \exp(-2\lambda)||\Delta||_{\infty}^2$ which gives us the result.

\subsection{Proof of Theorem \ref{thm:smg}}
Since $c$ is convex, we have for all iterations $c(u^*) - c(u^t) \ge \nabla c(u^t)^T (u^* - u^t)$. Applying Jensen's inequality, we have,
\[
c(\bar{u}) - c(u^*) \leq \frac{1}{T} \sum_{i = 1}^T (c(u^t) - c(u^*)) \leq \frac{1}{T} \sum_{i = 1}^T \nabla c(u^t)^T (u^t - u^*).
\]
Since $\E[ g^t|\mathcal{H}^{t-1}] = \nabla c(u^t)$, we have $M^t = \sum_{r = 1}^t(g^r - \nabla c(u^r))^T (u^r - u^*)$ is a martingale with respect to the filtration $\mathcal{H}^{t-1}$. Also $\E[M^1] = 0$ and $|M^t - M^{t-1}| = (g^t - \nabla c(u^t))^T (u^t - u^*) \leq 4 BW$. By Azuma Hoeffding bound, we have with probability $1- \delta$,
\[
\frac{1}{T} \sum_{i = 1}^T (\nabla c(u^t) - g^t)^T (u^t - u^*) \leq 4 BW\sqrt{\frac{2}{T}\log\frac{1}{\delta}}.
\]
Using the above, we have
\[
c(\bar{u}) - c(u^*) \le \frac{1}{T} \sum_{i = 1}^T (g^t)^T (u^t - u^*) + 4 BW\sqrt{\frac{2}{T}\log\frac{1}{\delta}}.
\]
Let us now bound the first term. Note that for $\eta \leq \frac{1}{8B}$ we have $|\eta g_i^t - \eta C^t| \leq 2\eta B \leq 1/4$. Thus, each update keeps the weights positive. Also observe that, $|| u^{t+1}||_1 = || u^{t}||_1$ by construction, implying that each $ u^t$ remains in the feasible set.

For simplicity, assume $W = 1$. Then all feasible solutions denote a probability distribution. Consider the following:
\begin{align*}
\mathsf{KL}(u^*|| u^t) - \mathsf{KL}(u^*|| u^{t+1}) &= \sum_{i=1}^n u^*_i \log \frac{u^{t+1}_i}{u^t_i}\\
&= \sum_{i=1}^n u^*_i \log (1 - \eta g^t_i + \eta (g^t)^T u^t)\\
&\ge \sum_{i=1}^n u^*_i(- \eta g^t_i + \eta (g^t)^T u^t) - \sum_{i=1}^n u^*_i(- \eta g^t_i + \eta (g^t)^T u^t)^2\\
&\ge \eta (g^t)^T(u^t - u^*) - 4 \eta^2 B^2.
\end{align*}
Here the first inequality follows from observing that $\log(1 + x) \geq x - x^2$ for all $x$ such that $x \ge - 1/4$. Rearranging, we have,
\begin{align*}
\frac{1}{T}\sum_{i = 1}^T (g^t)^T (u^t - u^*) &\le \frac{1}{\eta T}(\mathsf{KL}(u^*|| u^1) - \mathsf{KL}(u^*|| u^{t+1})) + 4 \eta B^2\\
&\le \frac{\log k}{\eta T} + 4 \eta B^2\\
&\le \sqrt{\frac{\log n}{T}}
\end{align*}
for $\eta \le \frac{1}{2 B}\sqrt{\frac{\log k}{T}}$. The first inequality follows from observing that $\mathsf{KL}(u^*|| u^1) \le \log k$. We can now rescale by $W$ to get the required result.

\subsection{Proof of Lemma \ref{lem:flip}}
Similar to the proof of \ref{lem:miss}, let $v = \Pi^{n-1}_{\Delta \rightarrow B}$ that is, $v_i = w_i - w_{n-1+i}$. Taking expectation of $g_{\textsf{miss}}$ over $\D_{\mathsf{flip}}$, we have
\begin{align*}
\underset{X \sim \D_{\mathsf{flip}}}{\E}[f(w;X)_i] &= \underset{X\sim \D_{\mathsf{flip}}}{\E}\left[X_n X_i\times \prod_{j \neq n}\sigma(p_j, -v_jX_n,X_j)\right]\\
&= \underset{\substack{Z \sim \D\\ C_n \sim \mathsf{Rad}(1- p_n)}}{\E}\left[C_nZ_n\underset{C_i \sim \mathsf{Rad}(1 - p_i)}{\E}\left[C_iZ_i\sigma(p_i, -v_iC_nZ_n, C_iZ_i)\right] \times\right.\\
&\qquad \qquad \left. \prod_{j \neq i,n}\underset{C_j \sim \mathsf{Rad}(1 - p_j)}{\E}\left[\sigma(p_j, -v_jC_nZ_n,C_j Z_j)\right]\right]\\
&= \underset{\substack{Z \sim \D\\ C_n \sim \mathsf{Rad}(1- p_n)}}{\E}\left[C_n Z_n \times \exp\left(-v_iC_nZ_nZ_i\right)Z_i\times \prod_{j \neq i,n}\exp\left(-v_jC_nZ_nZ_j\right)\right]\\
&= \underset{\substack{Z \sim \D\\ C_n \sim \mathsf{Rad}(1- p_n)}}{\E}\left[\exp\left(-\sum_{j=1}^{n-1}v_iX_nZ_j\right)C_nZ_nZ_i\right]\\
&= \underset{Z \sim \D}{\E}\left[\left((1-p_n)\exp\left(-\sum_{j=1}^{n-1}v_iZ_nZ_j\right) - p_n\exp\left(\sum_{j=1}^{n-1}v_iZ_nZ_j\right)\right)Z_nZ_i\right].
\end{align*}
The above follows from observing the following two facts that follow from the example mentioned in the discussion: for all $i$ 1) $\E_{C_i}[\sigma(p, A, C_iZ_i)] = \exp(AZ_i)$, 2) $\E_{C_i}[X_i\sigma(p, A, C_iZ_i)] = \exp(AZ_i)Z_i$ as long as $A$ is independent of $C_i$. 

Using the above, we have
\begin{align*}
 \underset{X \sim \D_{\mathsf{flip}}}{\E}[g_{\textsf{flip}}^i(w;X)] &= -\frac{(1-p_n)\E_{X}[f(w;X)_i] + p_n \E_{X}[f(- w;X)_i]}{1 - 2p_n}\\
 &= -\underset{Z \sim \D}{\E}\left[\exp\left(-\sum_{j=1}^{n-1}v_iZ_nZ_j\right)Z_nZ_i\right] = \nabla \widetilde{S}(w)_i.
\end{align*}
The last thing we need is that the above estimator has bounded $l_\infty$ norm for all $ w \in \Delta(\lambda, 2n-1)$ ($v \in B(\lambda, n-1)$). We have
\begin{align*}
|\sigma(p, a,X)| &= \frac{\left|(1- p)\exp(aX) -p\exp(-aX)\right|}{1 - 2p} \\
&= \frac{\left|(1- p)\left(1 + \sum_{i = 1}^\infty \frac{(aX)^i}{i!}\right) -p\left(1 + \sum_{i = 1}^\infty (-1)^i\frac{(aX)^i}{i!}\right)\right|}{1 - 2p}\\
&= \left|1 + \sum_{i = 1}^\infty \left(\frac{(aX)^{2i}}{(2i)!} + \frac{(aX)^{2i - 1}}{(1-2p)(2i-1)!}\right)\right|\\
&\le 1 + \sum_{i = 1}^\infty\frac{|aX|^{i}}{(1-2p)i!} \le \exp\left(\frac{|aX|}{1 - 2p}\right).
\end{align*}
Let $p_{\max}:= \max_i p_i$, then using the fact that $|X| = 1$ we get
\begin{align*}
|f(w; X)^i| &\le \frac{\exp(|v_i|)}{1 - 2p_i} \prod_{j \neq i,n} \exp\left(\frac{|w_j - w_{n-1+j}|}{1-2p_j}\right) \\
&= \frac{1}{1-2p_{\max}}\exp\left(\frac{\lambda}{1 - 2p_{\max}}\right).
\end{align*}
Here the inequality follows from observing that $\sum_{i=1}^{n-1}|v_i| = ||v||_1 \le \lambda$. Using above, we get the final bound on $||G_{\textsf{flip}}(w;X)||_{\infty} \leq \frac{1}{(1-p_{\max})^{2}} \exp\left(\frac{\lambda}{1-p_{\max}}\right)$.

\section{Ising Models with Non-zero Mean Field} \label{app:nonzero}
% The probability distribution of a spin configuration $z$ under this Ising model with non-zero mean field is as follows:
% \[
% \Pr[Z = z] = \frac{1}{\mathcal{Z \sim \D}} \exp\left(\sum_{(i,j) \in E}A_{ij} z_iz_j + \sum_i \theta_i z_i\right)
% \]
% where $\mathcal{Z \sim \D} = \sum_{z \in \{-1,1\}^n}\exp\left(\sum_{(i,j) \in E}A_{ij} z_iz_j + \sum_i \theta_i z_i\right)$ is the normalizing factor.

% We will define $\underline{\beta}:= \min\left\{\min_i |\theta_i|, \beta\right\}$ and $\underline{\lambda}:= \max_{i \in V}\left\{\sum_{j| (i,j) \in E}|A_{ij}| + |\theta_i|\right\}$. 
In this section, we will extend the analysis to the setting $\theta \ne 0$. Similar to the case of zero-mean field, consider the neighborhood of a fixed vertex (WLOG we choose $n$, we can do the same analysis for any $v \in V$). We will work with the following modified ISO:
\begin{align*}
\min_{v \in \mathbb{R}^{n}} & \quad\quad \underline{S}(v) := \mathbb{E}_{Z \sim \D} \left[\exp\left(-\sum_{j=1}^{n-1}v_{j} Z_n Z_j - v_nZ_n\right) \right]\\
\text{subject to} &\quad\quad || v||_1 \leq \lambda_{\ne 0}.
\end{align*}
Define $\underline{v}^* \in \mathbb{R}^{n-1}$ such that $\underline{v}^*_n = \theta_n$ and for $i \in [n-1]$, $\underline{v}^*_i = A_{ni}$ if $(i,n) \in E$  else 0.
It is not hard to see that the above loss is convex and following the same line of proof of Lemma \ref{lem:grad}, one can also show that $\nabla \underline{S}(\underline{v}^*) = 0$ making it an optimal solution. Further, one can prove a close to RSC property for the given objective.
\begin{lemma}
For all $ v$,
\[
\underline{S}(v) - \underline{S}(\underline{v}^*) \ge \nabla \underline{S}(\underline{v}^*)^T(v - \underline{v}^*) + \frac{\exp(-3\underline{\lambda})}{1 + \lambda}||(v - \underline{v}^*)_{-n}||_\infty.
\]
\end{lemma}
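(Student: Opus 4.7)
The plan is to mirror the proof of Lemma~\ref{lem:rsc}, treating the mean-field coefficient $v_n$ as multiplying a constant ``feature'' equal to $1$. Rewrite the argument of the exponential as $-\sum_{j=1}^{n-1}v_j Z_n Z_j - v_n Z_n = -Z_n\bigl(\sum_{j=1}^{n-1} v_j Z_j + v_n\bigr)$, so the augmented feature vector $\widetilde{Z} = (Z_1,\ldots,Z_{n-1},1)$ still satisfies $|\widetilde{Z}_i|\le 1$ and plays the role that $(Z_1,\ldots,Z_{n-1})$ did in the zero mean-field case.

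First I would invoke Lemma~5 of \cite{vuffray2016interaction} applied with $\widetilde{Z}$ in place of the raw observations. Its proof uses only boundedness of the coordinates and the $l_1$ constraint on $v$, so the same inequality carries over to give
\[
\underline{S}(v) - \underline{S}(\underline{v}^*) \ge \nabla \underline{S}(\underline{v}^*)^T(v - \underline{v}^*) + \frac{\exp(-\underline{\lambda})}{1 + \|v - \underline{v}^*\|_1}\, \Delta^T \underline{H}\, \Delta,
\]
where $\Delta = v - \underline{v}^*$ and $\underline{H}_{ij} = \E_{Z\sim\D}[\widetilde{Z}_i \widetilde{Z}_j]$ is the second-moment matrix of the augmented feature vector (so $\underline{H}_{ij} = \E[Z_iZ_j]$ for $i,j \le n-1$, $\underline{H}_{in} = \E[Z_i]$, and $\underline{H}_{nn}=1$).

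Next I would lower bound $\Delta^T \underline{H} \Delta = \E\bigl[(\sum_{i=1}^{n-1}\Delta_i Z_i + \Delta_n)^2\bigr]$ by the same conditioning argument used in the proof of Lemma~\ref{lem:rsc}. For any $j \in [n-1]$, condition on $Z_{-\{j,n\}}$ and split the inner expression as $\Delta_j Z_j + R_j$, where $R_j$ does not depend on $Z_j$. Lemma~\ref{lem:bresler}, applied to the non-zero mean-field model with width $\underline{\lambda}$, ensures that both values of $Z_j$ have conditional probability at least $\exp(-2\underline{\lambda})/2$, and then the identity $(a+b)^2+(-a+b)^2 = 2(a^2+b^2) \ge 2a^2$ yields $\Delta^T \underline{H}\Delta \ge \exp(-2\underline{\lambda})\Delta_j^2$. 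Taking the maximum over $j \in [n-1]$ gives $\Delta^T \underline{H}\Delta \ge \exp(-2\underline{\lambda})\|\Delta_{-n}\|_\infty^2$, which combined with the first inequality and the constraint $\|\Delta\|_1 \le 2\lambda$ produces the prefactor $\exp(-3\underline{\lambda})/(1+\lambda)$ as claimed.

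The main subtlety is precisely why the bound is restricted to $\Delta_{-n}$: the coordinate $\Delta_n$ multiplies a deterministic $1$, so the conditioning trick cannot be applied to isolate $\Delta_n^2$. Any such attempt would have to exploit randomness in the $n$-th ``feature,'' which simply is not there. Fortunately this is harmless for structure learning, because the edges of $G$ are determined by $(A_{nj})_{j \ne n}$, i.e., by the coordinates $\Delta_j$ for $j \in [n-1]$, and the $l_\infty$ control over those coordinates is exactly what the recovery argument needs.
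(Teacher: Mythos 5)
Your proposal is correct and follows essentially the same route as the paper's own proof: augmenting the feature vector with a constant coordinate $1$, invoking Lemma 5 of \cite{vuffray2016interaction} to reduce to the quadratic form $\Delta^T \underline{H}\,\Delta$, and lower-bounding it coordinate-by-coordinate via conditioning on $Z_j$ and Lemma \ref{lem:bresler}. Your remark on why the bound necessarily excludes the $n$-th coordinate (the constant feature carries no randomness) is a useful clarification but does not change the argument.
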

\begin{proof}
Consider the following transformation for the input variables of the Ising model, $Z \rightarrow (Z,1)$.  We can then use Lemma 5 from \cite{vuffray2016interaction} to get for all $ v$,
\[
\underline{S}(v) - \underline{S}(v^*) \ge \nabla \underline{S}(v^*)^T(v - v^*) + \frac{\exp(-\lambda)}{1 + || v - v^*||_1}(v - v^*)^T \underline{H} (v - v^*)
\]
where $\underline{H}$ is as follows:
\[
\underline{H}_{ij} = 
\begin{cases}
1 &\text{ for } i = j = n\\
\E_{Z \sim \D}[Z_i] &\text{ for } i \in [n-1], j = n\\
\E_{Z \sim \D}[Z_j] &\text{ for } i = n, j \in [n-1]\\
\E_{Z \sim \D}[Z_i Z_j] &\text{ otherwise.}\\
\end{cases}
\]
We will now bound $v^T\underline{H}v$, we have, $v^T H v = \E_{Z \sim \D}\left[\left(\sum_{i=1}^{n-1}v_i Z_i + v_n\right)^2\right]$. Similar to the proof of Lemma \ref{lem:rsc}, for all $j \in [n-1]$,
\begin{align*}
&v^T \underline{H} v \\
&= \sum_{z_{-n} \in \{-1,1\}^{n-1}}\Pr[Z_{-n} = z_{-n}]\left(\sum_{i=1}^{n-1}v_i Z_i + v_n\right)^2\\
&= \sum_{z_{-\{j,n\}} \in \{-1,1\}^{n-2}}\Pr[Z_j = 1| Z_{-\{j,n\}} = z_{-\{j,n\}}]\Pr[Z_{-\{j,n\}} = z_{-\{j,n\}}]\left(v_j + \sum_{i=1, i \neq j}^{n-1}v_i z_i + v_n\right)^2 \\
&\quad + \sum_{z_{-\{j,n\}}\in \{-1,1\}^{n-2}} \Pr[Z_j = -1| Z_{-\{j,n\}} = z_{-\{j,n\}}]\Pr[Z_{-\{j,n\}} = z_{-\{j,n\}}]\left(-v_j + \sum_{i=1, i \neq j}^{n-1}v_i z_i + v_n\right)^2\\
&\geq \frac{\exp(-2\underline{\lambda})}{2}\sum_{z_{-\{j,n\}}\in \{-1,1\}^{n-2}}\Pr[Z_{-\{j,n\}} = z_{-\{j,n\}}]\left[ 2v_j^2 + 2\left(\sum_{i=1, i \neq j}^{n-1}v_i z_i + v_n\right)^2 \right]\\
&\geq \exp(-2\underline{\lambda}) v_j^2.
\end{align*}
Thus, we have that $v^T\underline{H}v \ge \exp(-2\underline{\lambda})||v_{-n}||^2_{\infty}$. Substituting this in the previous equation gives us the desired result.
% In the above, the first inequality follows from observing that $\Pr[Z_j = 1|Z_{-\{j,n\}} = z_{-\{j,n\}}] \ge \frac{\exp(-2\underline{\lambda})}{2}$ using Lemma \ref{lem:bresler}. Thus we get, $ v^T H v \geq \exp(-2\underline{\lambda})|| v||_{\infty}^2$.
\end{proof}
Similar to the zero mean-field case, we can transform the problem from $B(\lambda, n)$ to $\Delta(\lambda, 2n + 1)$ to get the following optimization problem:
\begin{align*}
\min_{w \in \mathbb{R}^{2n+1}} &\quad\quad\widetilde{\underline{S}}(w):= \underline{S}(\Pi^{n}_{\Delta \rightarrow B}(w)) = \mathbb{E}_{Z \sim \D} \left[\exp\left(-\sum_{j=1}^{n-1}(w_{j} - w_{n + j}) Z_n Z_j - (w_n - w_{2n})Z_n\right) \right]\\
\text{subject to} &\quad\quad|| w||_1 = \lambda, w \ge 0.
\end{align*}
Note that the above loss is also convex and similar to Lemma \ref{lem:grad}, we can show that $\underline{w}^* = \Pi_{B\rightarrow \Delta}^{n}(\underline{v}^*)$ for $\underline{v}^*$ defined previously, is an optimal solution of the above optimization (by showing a zero gradient value). Thus for running SMG, we only need to show the existence of an unbiased estimator. Observe that the gradient of $\widetilde{\underline{S}}(w)$ with respect to $w$ can be computed as follows, we have $\nabla \widetilde{\underline{S}}(w)_{2n+1} = 0$, $\forall i \in [n-1]$,
\[
\nabla \widetilde{\underline{S}}(w)_i = -\nabla \widetilde{\underline{S}}(w)_{i + n} =  -\mathbb{E}_{Z \sim \D} \left[\exp\left(-\sum_{j=1}^{n-1}(w_{j} - w_{n + j}) Z_n Z_j - (w_n - w_{2n})Z_n\right) Z_n Z_i\right]
\]
and
\[
\nabla \widetilde{\underline{S}}(w)_n = -\nabla \widetilde{\underline{S}}(w)_{2n} =  -\mathbb{E}_{Z \sim \D} \left[\exp\left(-\sum_{j=1}^{n-1}(w_{j} - w_{n + j}) Z_n Z_j - (w_n - w_{2n})Z_n\right) Z_n\right].
\] 

Here we will focus on only the missing-data model. Similar ideas extend to the random flipped-data model also. Consider the following estimators, $\forall i \in [n-1]$
 \begin{align*}
\underline{g}_{\textsf{miss}}^i(w;X) &= -\frac{\exp\left(-(w_{n} - w_{2n})X_n\right)X_n}{1- p_n}\times\frac{\exp\left(-(w_{i} - w_{n + i})X_nX_i\right)X_i}{1- p_i}\times \\
&\qquad\qquad \prod_{j \neq i,n}\frac{\exp\left(-(w_{j} - w_{n + j})X_nX_j\right) - p_j}{1- p_j}
\end{align*}
and
 \[
\underline{g}_{\textsf{miss}}^n(w;X) = -\frac{\exp\left(-(w_{n} - w_{2n})X_n\right)X_n}{1-p_n} \times \prod_{j \neq n}\frac{\exp\left(-(w_{j} - w_{n- 1 + j})X_nX_j\right) - p_j}{1- p_j}.
\]
We will show that $G(w;X) = \sum_{i=1}^{n} \underline{g}_{\textsf{miss}}^i (e^i - e^{n+i})$ for $e^i \in \mathbb{R}^{2n+1}$ is an indicator vector for coordinate $i$, is an unbiased estimator of the gradient of $\widetilde{\underline{S}}$. Let $v = \Pi^n_{\Delta \rightarrow B}(w)$. Taking expectation of $g_{\textsf{miss}}$ over $\D_{\mathsf{miss}}$, we have for all $i \in [n-1]$
% We need to construct an unbiased estimator for the above. Consider the following estimator $g_{\textsf{miss}}$, for all $i \in [n-1]$,
%  \[
% g_{\textsf{miss}}(w;X)_i = -\frac{\exp\left(-v_iX_nX_i\right)X_nX_i}{(1- p_i)(1-p_n)} \prod_{j \neq i,n}\frac{\exp\left(-(w_{j} - w_{n- 1 + j})X_nX_j\right) - p_i}{1- p_i}.
% \]
% We will show that $g_{\textsf{miss}}$ is in fact an unbiased estimator of the gradient at fixed $w$ as follows,
\begin{align*}
\underset{X \sim \D_{\textsf{miss}}}{\E}[\underline{g}_{\textsf{miss}}^i(w;X)] &= -\underset{X\sim \D_{\textsf{miss}}}{\E}\left[\frac{\exp\left(- v_n X_n\right)X_n}{1 - p_n} \times \frac{\exp\left(- v_i X_nX_i\right)X_i}{1 - p_i}\times \prod_{j \neq i,n}\frac{\exp\left(-v_jX_nX_j\right) - p_i}{1- p_i}\right]\\
&= -\underset{\substack{Z \sim \D\\ C_n \sim \mathsf{Ber}(1- p_n)}}{\E}\left[ \frac{\exp\left(-v_nC_nZ_n\right)C_nZ_n}{1-p_n} \times \underset{C_i \sim \mathsf{Ber}(1 - p_i)}{\E} \left[\frac{\exp\left(-v_iC_nZ_nC_iZ_i\right)C_iZ_i}{1 - p_i}\right] \right.\times \\
&\qquad \qquad \left. \prod_{j \neq i,n}\underset{C_j \sim \mathsf{Ber}(1 - p_j)}{\E}\left[\frac{\exp\left(-v_jC_nZ_nC_jZ_j\right) - p_j}{1- p_j}\right]\right]\\
&= - \underset{\substack{Z \sim \D\\ C_n \sim \mathsf{Ber}(1- p_n)}}{\E}\left[\frac{\exp\left(-\sum_{j\neq n}v_jC_nZ_nZ_j - v_nC_nZ_n\right)Z_iC_nZ_n}{1-p_n}\right]\\
&= -\underset{Z \sim \D}{\E}\left[\exp\left(-\sum_{j \neq n}v_jZ_nZ_j - v_nZ_n\right)Z_i Z_n\right] = \nabla \widetilde{\underline{S}}(w)_i.
\end{align*}

Similarly, 
\begin{align*}
&\underset{X \sim \D_{\textsf{miss}}}{\E}[\underline{g}_{\textsf{miss}}^n(w;X)] \\
&= -\underset{X\sim \D_{\textsf{miss}}}{\E}\left[\frac{\exp\left(-v_n X_n\right)X_n}{1-p_n} \prod_{j \neq n}\frac{\exp\left(-v_jX_nX_j\right) - p_j}{1- p_j}\right]\\
&= -\underset{\substack{Z \sim \D\\ C_n \sim \mathsf{Ber}(1- p_n)}}{\E}\left[\frac{\exp\left(-v_nC_nZ_n\right)C_nZ_n}{1 - p_n} \prod_{j \neq n}\underset{C_j \sim \mathsf{Ber}(1 - p_j)}{\E}\left[\frac{\exp\left(-v_jC_nZ_nC_jZ_j\right) - p_j}{1- p_j}\right]\right]\\
&= - \underset{\substack{Z \sim \D\\ C_n \sim \mathsf{Ber}(1- p_n)}}{\E}\left[\frac{\exp\left(-\sum_{j\neq n}v_jC_nZ_nZ_j - v_nC_nZ_n\right)C_nZ_n}{1-p_n}\right]\\
&= -\underset{Z \sim \D}{\E}\left[\exp\left(-\sum_{j \neq n}v_jZ_nZ_j -v_nZ_n\right)Z_n\right] = \nabla \widetilde{\underline{S}}(w)_n.
\end{align*}
This gives us that $\underset{X \sim \D_{\textsf{miss}}}{\E}[G_{\textsf{miss}}(w;X)] = \nabla \widetilde{\underline{S}}(w)$. Lastly we can see that the norm bound of the estimator follows the same proof as the zero-mean field case and we get, $||\underline{G}_{\textsf{miss}}(w;X)||_{\infty} \leq \frac{1}{(1-p_{\max})^{2}} \exp\left(\frac{\lambda}{1-p_{\max}}\right)$ for all $w\in \Delta(\lambda, 2n+1)$. 

Now we can straightforwardly apply Theorem \ref{thm:main} to get the following:
\begin{theorem}
For known constants $p_i$, given samples with missing data from an unknown $n$-variable Ising model $\D(A, \theta)$, for $T \ge \lambda^4\log (n/\delta) \exp(C\lambda)\epsilon^{-4}$ with large enough constant $C>0$, Algorithm \ref{alg:smgm} returns $\bar{w}$ such that:
\[
\forall~ i \in [n-1],~\left|\Pi_{\Delta \rightarrow B}^{n-1}(\bar{w})_i - A_{ni}\right| \leq \epsilon.
\]
\end{theorem}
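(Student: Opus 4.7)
The plan is to mirror the proof of Theorem \ref{thm:main} almost verbatim, swapping in the non-zero mean-field analogues of every object: the dimension becomes $2n+1$ instead of $2n-1$, the optimization target becomes $\widetilde{\underline{S}}$ instead of $\widetilde{S}$, the optimum becomes $\underline{w}^* = \Pi^n_{B \rightarrow \Delta}(\underline{v}^*)$, and the restricted strong convexity lemma is the modified one proved earlier in this appendix that controls only the non-mean-field coordinates $(\cdot)_{-n}$ of the iterate.

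Concretely, I would proceed in three steps. First, since each $X^t$ drawn at iteration $t$ is fresh and independent of the history, the calculation preceding this theorem shows that $\underline{G}_{\textsf{miss}}(w^t;X^t)$ is an unbiased estimator of $\nabla \widetilde{\underline{S}}(w^t)$ conditioned on $X^1,\dots,X^{t-1}$, with deterministic $l_\infty$-bound $B := (1-p_{\max})^{-2}\exp(\lambda/(1-p_{\max}))$. Applying Theorem \ref{thm:smg} with $W=\lambda$ and $k = 2n+1$ yields, with probability at least $1-\delta$,
\[
\widetilde{\underline{S}}(\bar{w}) - \widetilde{\underline{S}}(\underline{w}^*) \le \frac{4\lambda}{(1-p_{\max})^2}\exp\!\left(\frac{\lambda}{1-p_{\max}}\right)\left(\sqrt{\tfrac{\log(2n+1)}{T}} + \sqrt{\tfrac{2}{T}\log(1/\delta)}\right).
\]

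Second, I would set $\bar{v} := \Pi^n_{\Delta \rightarrow B}(\bar{w})$ and note that by construction $\widetilde{\underline{S}}(\bar{w}) = \underline{S}(\bar{v})$ and $\widetilde{\underline{S}}(\underline{w}^*) = \underline{S}(\underline{v}^*)$, while the analogue of Lemma \ref{lem:grad} established earlier gives $\nabla \underline{S}(\underline{v}^*) = 0$. The RSC lemma for $\underline{S}$ proved in this appendix then gives
\[
\|(\bar{v} - \underline{v}^*)_{-n}\|_\infty^2 \le (1+\lambda)\exp(3\lambda)\bigl(\underline{S}(\bar{v}) - \underline{S}(\underline{v}^*)\bigr).
\]
Since for $i \in [n-1]$ we have $\underline{v}^*_i = A_{ni}$ and $\bar{v}_i = \Pi^{n-1}_{\Delta \rightarrow B}(\bar{w})_i$ (because the first $n-1$ coordinates of the $n$-dimensional and $(n-1)$-dimensional $\Pi_{\Delta \rightarrow B}$ mappings agree on the corresponding simplex coordinates), the $l_\infty$ bound on $(\bar{v} - \underline{v}^*)_{-n}$ is exactly a uniform bound on the edge-weight errors we want.

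Third, chaining the two inequalities and choosing $T \ge C_0 \lambda^4 (1-p_{\max})^{-4}\exp\!\bigl(\lambda(2/(1-p_{\max}) + 6)\bigr)\log(n/\delta)\,\epsilon^{-4}$ makes the right-hand side at most $\epsilon^2$, which absorbs into $\exp(C\lambda)$ for a sufficiently large absolute constant $C$. This yields $|\Pi^{n-1}_{\Delta \rightarrow B}(\bar{w})_i - A_{ni}| \le \epsilon$ for all $i \in [n-1]$, as claimed. There is essentially no new obstacle: the only subtle point is the $(\cdot)_{-n}$ projection in the RSC bound, which conveniently matches the statement of the theorem (the guarantee is only on the first $n-1$ coordinates, not on the mean-field coordinate), so the proof goes through by a mechanical substitution into the argument of Theorem \ref{thm:main}.
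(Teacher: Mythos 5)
Your proposal is correct and matches the paper's intended argument exactly: the paper proves this theorem by simply invoking the machinery of Theorem \ref{thm:main} after having established the non-zero mean-field analogues (unbiased estimator $\underline{g}_{\textsf{miss}}$ with the same $l_\infty$ bound, zero gradient at $\underline{v}^*$, and the RSC lemma controlling $\|(\cdot)_{-n}\|_\infty$), which is precisely the substitution you carry out. Your observation that the $(\cdot)_{-n}$ restriction in the RSC bound aligns with the theorem guaranteeing only the first $n-1$ coordinates is the one point the paper leaves implicit, and you handle it correctly.
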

%\noindent\textbf{Note:} The above theorem only guarantees only for recovery of the parameters of $A$, however this does not guarantee recovery of $\theta$. 

\end{document}